\documentclass[aps,prl,reprint,amsmath,amssymb,floatfix,superscriptaddress]{revtex4-2}

\usepackage{graphicx}
\usepackage{amsthm}
\usepackage{bm}
\usepackage[colorlinks=true,linkcolor=blue,citecolor=blue,urlcolor=blue]{hyperref}

\theoremstyle{plain}
\newtheorem{theorem}{Theorem}
\newtheorem{lemma}{Lemma}
\newtheorem{corollary}{Corollary}

\newcommand{\eeq}{\epsilon_{\mathrm{eq}}}
\newcommand{\Th}{\Theta}
\newcommand{\Pkj}{P_{K,j}}
\newcommand{\Zd}{Z^{\downarrow}}
\newcommand{\Geq}{G^{\mathrm{eq}}}

\newcommand{\Appendix}{%
  \onecolumngrid\vspace{10pt}
  \begin{center}\rule{0.42\textwidth}{0.5pt}\\[5pt]
    {\bfseries Appendix}\end{center}
  \vspace{2pt}\twocolumngrid
  \setcounter{equation}{0}\renewcommand{\theequation}{A\arabic{equation}}%
  \renewcommand{\theHequation}{A\arabic{equation}}}
\newcounter{appx}
\renewcommand{\theappx}{\Alph{appx}}

\newcommand{\app}[2]{%
  \refstepcounter{appx}\label{#1}%
  \smallskip\noindent\textit{Appendix~\theappx: #2}---}

\begin{document}

\title{Where Energy Is Spent Sets the Depth of Kinetic Proofreading}

\author{U\u{g}ur \c{C}etiner}
\email{ucetiner@ku.edu.tr}
\affiliation{Department of Physics, Koç University\\
Rumelifeneri Yolu 34450 Sarıyer, İstanbul, Türkiye\\}

\begin{abstract}
Kinetic proofreading buys accuracy with energy. Where the energy is spent, we show, caps how much accuracy it can buy. Drive layer $j$ of a $K$-checkpoint cascade and the error never falls below $\eeq^{K-j+1}$, where $\eeq$ is the error the same network reaches at equilibrium. The exponent is one plus the first Betti number of the downstream
subgraph, which counts its independent cycles. Driving the first layer can recruit all $K$ checkpoints, whereas driving the final layer restricts the best possible scaling to $\eeq^2$, independently of network depth. We also derive the exact dependence on driving strength and show that the error either decreases monotonically or has a unique global minimum, beyond which stronger driving worsens accuracy. Drive placement therefore imposes a topological limit on nonequilibrium error correction.
\end{abstract}

\maketitle
\begin{figure*}[t]
  \centering
  \includegraphics[width=\linewidth]{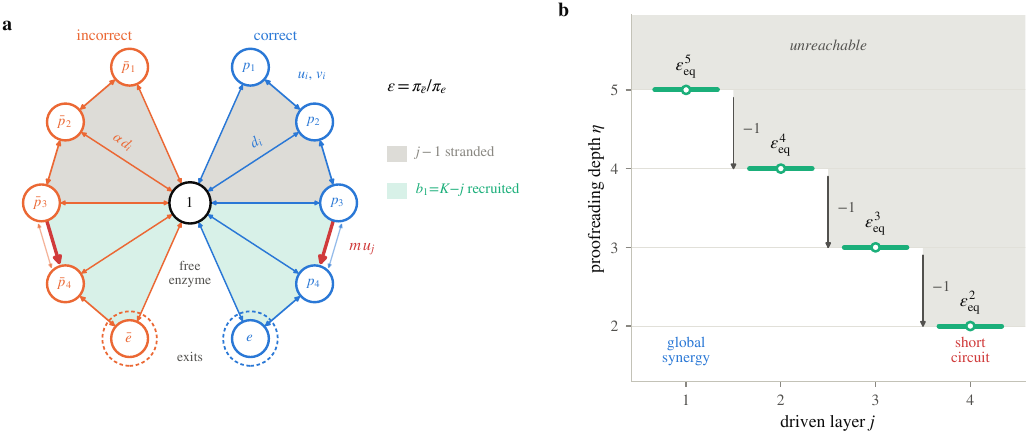}
  \caption{\textbf{Energy expenditure as a topological gate.} \textbf{(a)}~The core butterfly graph: two wings glued at the free-enzyme vertex $1$, here for $K=5$ and drive at layer $j=3$. Each spoke is a reversible binding/fall-off pair and discrimination enters through the fall-off rate, $d_i$ on the correct wing and $\alpha d_i$ on the incorrect one. The exits are $e=p_K$, $\bar e=\bar p_K$. Each triangular cell $\{1,p_r,p_{r+1}\}$ is one independent cycle, counted per wing. The $K-j$ green cells, at or downstream of the driven pair, can be
  recruited for error correction while the $j-1$ gray ones cannot, regardless of how discriminatory they are. \textbf{(b)}~The staircase: the effective proofreading depth $\eta=\ln(\inf\epsilon)/\ln\eeq$ falls by one for each layer the drive moves downstream, from global synergy ($\eta=K$) to the short circuit ($\eta=2$). Only the position of the drive changes along this axis.}
  \label{fig:law}
\end{figure*}

Biological copying is accurate to a degree that equilibrium binding cannot easily explain. Pauling noted that discrimination between a correct and an incorrect substrate by binding free energy alone is limited by the Boltzmann factor of their free-energy gap \cite{Pauling1957}. Valine is misincorporated at isoleucine positions in rabbit hemoglobin at only a few parts in $10^{4}$ \cite{Loftfield1972}. Matching this fidelity at equilibrium would require roughly $8\,k_{\mathrm{B}}T$ of discrimination, far more than the small structural difference between valine and isoleucine can plausibly supply \cite{Hopfield1974}.

Hopfield showed how nonequilibrium driving can overcome this limitation \cite{Hopfield1974}: in the strongly driven regime, an activated intermediate provides a second discriminatory opportunity, reducing the error from $\eeq$ toward $\eeq^{2}$, where $\eeq<1$ is the equilibrium error. Ninio reached the same conclusion independently \cite{Ninio1975}. Subsequent work has characterized kinetic proofreading across network architectures and quantified the tradeoffs among error, speed, and dissipation \cite{Bennett1979,Ehrenberg1980,Murugan2012,Sartori2013,Murugan2014,Sartori2015,Rao2015,Banerjee2017,Wong2018,cetiner2023universal}. The mechanism has been implicated in processes ranging from aminoacyl-tRNA synthesis to T-cell antigen discrimination \cite{Hopfield1976,McKeithan1995}.

Squaring is not the ceiling. A network with $K$ discriminatory checkpoints carries $K-1$ independent cycles per wing (Fig.~\ref{fig:law}(a)), and we have shown previously that breaking detailed balance at a \emph{single} transition (per wing) can, in the best case, push the error as low as $\eeq^{K}$ \cite{Cetiner2022}. That is, one transition can unlock the full discriminatory potential of all $K$ checkpoints, which we call \emph{global synergy}. But ``a single transition'' is not one choice---it is $K-1$ choices---and Hopfield's picture, built on a single cycle, gives no reason to expect them to differ.

They differ maximally. We take the undriven network to obey detailed balance and break it by multiplying the symmetry-related pair $p_j\to p_{j+1}$ and $\bar p_j\to\bar p_{j+1}$ (Fig.~\ref{fig:law}(a)) by a common factor $m>1$. Multiplying by $m>1$ can be read as coupling these two transitions, and only these, to ATP hydrolysis \cite{Hopfield1974}. Here we prove that the steady-state error of a core butterfly network with $K$ checkpoints then obeys $\epsilon>\eeq^{\,K-j+1}$, for every admissible choice of rate constants and every finite $m>1$, and that this bound is sharp.

The exponent has a purely topological reading. Writing $\eta$ for the \emph{effective proofreading depth} [Fig.~\ref{fig:law}(b)],
\begin{equation}
  \eta\;\equiv\;\frac{\ln\left(\inf\epsilon\right)}{\ln\eeq}
  \;=\;1+K-j\;=\;1+b_1\!\left(C_j^{\downarrow}\right),
  \label{eq:depth}
\end{equation}
where $C_r^{\downarrow}$ denotes the induced subgraph on $\{1,p_r,\ldots,p_K\}$, one wing with the vertices $p_1,\ldots,p_{r-1}$ deleted and the root retained; $C_j^{\downarrow}$ is thus the part of the wing from the driven edge onward, and its first Betti number $b_1(C_j^{\downarrow})$ counts the independent cycles lying at or downstream of the drive. Equilibrium alone supplies one factor of $\eeq$, and each of those cycles supplies at most one
more. The $j-1$ upstream cycles contribute nothing to the exponent: they are physically present and every bit as discriminatory, yet they cannot increase the effective proofreading depth.

Nonequilibrium accuracy is usually posed as a currency problem: how much error can be removed for how much dissipation, and how fast \cite{Bennett1979,Murugan2012,Sartori2015,Rao2015,Lan2012,Banerjee2017}. Eq.~\eqref{eq:depth} identifies a resource spent before any currency changes hands: two networks with identical binding free energies, identical undriven rates, and the same driving affinity can differ by orders of magnitude in the fidelity
they can \emph{reach}, purely because the driven transition sits at a different layer. At $j=K-1$ the cascade suffers an \emph{information-theoretic short circuit}, $\eta=2$ however deep it is. Position does not trade against
dissipation but rather caps what dissipation can achieve.

\textit{Framework.}---We model the system as a continuous-time Markov process on a finite state space, encoded as a graph $G$ whose vertices are the states and whose directed edges carry labels $\ell(i\to j)>0$ read as transition rates \cite{Gunawardena2012,Mirzaev2013}. Every edge is accompanied by its reverse and $G$ is strongly connected, so a unique steady state exists. The occupation probabilities $x_i(t)=\Pr\{X(t)=i\}$, collected into the column vector $\bm{x}(t)$, obey the master equation $\dot{\bm{x}}=\mathcal{L}(G)\,\bm{x}$, whose generator is the graph Laplacian $\mathcal{L}(G)$ defined as
\begin{equation}
  \bigl[\mathcal{L}(G)\bigr]_{ji}=
  \begin{cases}
    \ell(i\to j), & j\neq i,\\[1pt]
    -\sum_{k\neq i}\ell(i\to k), & j=i .
  \end{cases}
  \label{eq:laplacian}
\end{equation}
The steady state $\pi$ is the unique normalized vector in $\ker\mathcal{L}(G)$.

A \emph{simple path} $P:i_1\to\cdots\to i_k$ is a sequence of edges whose vertices are all distinct; a \emph{cycle} $\gamma$ is a closed path whose only repeated vertex is its start. Every path below is simple. To each path we assign the \emph{action} \cite{Lebowitz1999}
\begin{equation}
  S(P)=\sum_{r=1}^{k-1}\ln\frac{\ell(i_r\to i_{r+1})}{\ell(i_{r+1}\to i_r)}.
  \label{eq:action}
\end{equation}
If $P$ is a cycle, $S(P)$ is called the \emph{affinity} of that cycle \cite{Seifert2008,Schnakenberg1976}.

The steady state satisfies detailed balance precisely when every cycle has vanishing affinity, $S(\gamma)=0$. This equivalence is known as the \emph{cycle condition}. Its consequence is that $S$ is \emph{path independent}: any two simple paths with the same endpoints carry the same action. Writing $S_{\mathrm{eq}}(i,1)$ for that common value along any path from $i$ to a fixed reference vertex $1$,
\begin{equation}
  \pi_i\;\propto\;e^{-S_{\mathrm{eq}}(i,1)},
  \label{eq:boltzmann}
\end{equation}
which is the analog of the Boltzmann distribution for a Markov process on a discrete state space.

Away from equilibrium, some cycle has $S(\gamma)\neq0$, path independence is lost, and Eq.~\eqref{eq:boltzmann} is no longer well posed: two paths from $i$ to $1$ return different actions, and nothing selects between them. The way out is not to select but to average. A \emph{spanning tree of $G$ rooted at $1$} is a subgraph containing all vertices, free of cycles, in which every vertex but the root has exactly one outgoing edge [Fig.~\ref{fig:objects}(a,b), Appendix]. In such a tree $T$ every vertex $i$ reaches the root along a unique simple path, which we write $T_i$. Let \(\Th_1(G)\) be the set of these trees and \(w(T)=\prod_{i\to j\in T}\ell(i\to j)\) their weights. Normalizing these weights defines the \emph{arboreal distribution}, under which \cite{Cetiner2022}
\begin{equation}
  \pi_i\;\propto\;\bigl\langle e^{-S(T_i)}\bigr\rangle,
  \qquad
  \langle\,\cdot\,\rangle\equiv
  \sum_{T\in\Th_1(G)}
  \frac{w(T)}{w\bigl(\Th_1(G)\bigr)}(\,\cdot\,),
  \label{eq:arboreal}
\end{equation}
where \(w(\Th_1(G))=\sum_{T\in\Th_1(G)}w(T)\) and the angular brackets represent an average over the arboreal distribution.

Three things matter here. (i) Eq.~\eqref{eq:arboreal} is exact arbitrarily far from equilibrium. (ii) At equilibrium every tree returns the same action and the average collapses to Eq.~\eqref{eq:boltzmann}. (iii) Most importantly, the viewpoint is path-centric: for fixed endpoints the number of distinct path actions is set by the number of energetic transitions, not by graph size, which keeps the theorem below tractable as the graph grows (Appendix~\ref{app:edge}). 

\textit{Network and drive.}---The core butterfly graph $G$ [Fig.~\ref{fig:law}(a)] consists of a correct wing on $\{1,p_1,\ldots,p_K\}$ and its mirror image, joined only at the free-enzyme vertex $1$. Within a wing $p_i\to p_{i+1}$ and $p_{i+1}\to p_i$ carry rates $u_i,v_i$ and every $p_i$ can bind from and fall off to $1$; the exits are $e=p_K$, $\bar e=\bar p_K$, and the error is $\epsilon=\pi_{\bar e}/\pi_e$. Discrimination enters only through the fall-off rates,
\begin{equation}
  \ell(\bar p_i\to1)=\alpha\,\ell(p_i\to1)=\alpha\, d_i,
  \qquad \alpha>1,
  \label{eq:alpha}
\end{equation}
the two wings sharing their binding rates, $\ell(1\to\bar p_i)=\ell(1\to p_i)$, and their internal rates $u_i,v_i$. 
We call any strictly positive choice of rates satisfying detailed balance before driving \emph{admissible}, and write $\Geq$ for the resulting undriven graph. Eqs.~\eqref{eq:boltzmann} and \eqref{eq:alpha} then give $\eeq=\alpha^{-1}$, so the network has $K$ checkpoints but only a single factor of discrimination. We break detailed balance by multiplying the symmetry-related pair $p_j\to p_{j+1}$ and $\bar p_j\to\bar p_{j+1}$, $1\le j\le K-1$, by a common factor $m>1$, so that the driven cycles carry affinity $\Delta\mu=\ln m$. This turns $\Geq$ into the driven graph $G$, with the same vertices and edges but with $u_j$ replaced by $m\,u_j$ on both driven edges.

\textit{Exact error.}---Two edges are driven, one per wing, but they never interfere in the following sense: the wings meet only at $1$, so the path $T_i$ from any vertex $i$ to $1$ stays inside the wing containing $i$ and can meet only that wing's driven edge---along the drive, against it, or not at all. This shifts $S(T_i)$ from
its undriven value $S_{\mathrm{eq}}(i,1)$ by $+\ln m$, by $-\ln m$, or not at all, and partitions $\Th_1(G)$ into $\Th_-(i)$, $\Th_+(i)$, and $\Th_0(i)$, respectively---the subscript recording the sign of the change in the factor
$e^{-S(T_i)}$, not in the action---so Eq.~\eqref{eq:arboreal} collapses
(Appendix~\ref{app:edge}),
\begin{equation}
  \pi_i\;\propto\;e^{-S_{\mathrm{eq}}(i,1)}
  \Bigl[A_0(i)+m\,A_+(i)+m^{-1}A_-(i)\Bigr],
  \label{eq:three}
\end{equation}
whose \emph{arboreal coefficients} $A_\bullet(i)=w(\Th_\bullet(i))/w(\Th_1(G))$, $\bullet\in\{0,+,-\}$, are the arboreal probabilities of the three classes and sum to one. In the butterfly graph, Fig.~\ref{fig:law}(a), a simple path from $p_K$ to $1$ cannot traverse $p_j\to p_{j+1}$ without revisiting $p_{j+1}$, so $A_-(e)=A_-(\bar e)=0$. Using $A_0=1-A_+$ at each exit and canceling the common normalization gives the error exactly:
\begin{equation}
  \boxed{\;
  \epsilon= \frac{\pi_{\bar e}}{\pi_e}=\eeq\,
  \frac{1+(m-1)A_+(\bar e)}{1+(m-1)A_+(e)}\;,}
  \label{eq:exacterror}
\end{equation}
where $\eeq$ is $e^{-S_{\mathrm{eq}}(\bar e,1)+S_{\mathrm{eq}}(e,1)}$ by
Eq.~\eqref{eq:boltzmann}.

\textit{Reduction to one wing.}---For a vertex set $U$, the \emph{induced} subgraph $G[U]$ is $U$ together with every edge of $G$ joining two of its vertices. Write $C_r=G[\{1,p_1,\ldots,p_r\}]$ for the root and the first $r$
proximal states of the correct wing, and let
\begin{equation}
  Z_r(\theta;m)=\sum_{T\in\Th_1(C_r)} w_{\theta}(T)
  \label{eq:Zdef}
\end{equation}
be the total weight of its rooted spanning trees, $w_{\theta}$ denoting the tree weight with every fall-off rate multiplied by an auxiliary knob $\theta>0$. Therefore, by Eq.~\eqref{eq:alpha}, $\theta=1$ gives the correct wing and $\theta=\alpha$ the incorrect one.

\textit{Two knobs, two arguments.}---The parameter $\theta$ scales the fall-off rates, while $m$ replaces $u_j$ by $m\,u_j$. We therefore write $X(\theta;m)$ with the convention $X(\theta)\equiv X(\theta;1)$. Thus a missing second argument denotes the undriven value. Because the wings meet only at $1$, a rooted spanning tree of $G$ is exactly a pair of rooted trees, one per wing, and weights multiply:
\begin{equation}
  w\bigl(\Th_1(G)\bigr)=Z_K(1;m)\,Z_K(\alpha;m).
  \label{eq:fullfactor}
\end{equation}
Let $B_{K,j}(\theta)$ be the weight of those trees in $\Th_1(C_K)$ whose path from $p_K$ to $1$ contains the reverse driven edge $p_{j+1}\to p_j$. Such a tree cannot also contain $p_j\to p_{j+1}$, which would close a two-cycle, so $B_{K,j}$ carries no $u_j$ and takes no drive argument. Because $T_e$ never leaves the correct wing, $\Th_+(e)$ constrains that wing to be such a tree and leaves the other free, and $\Th_+(\bar e)$ does the reverse, so the same factorization gives
\begin{equation}
  \begin{aligned}
    w\bigl(\Th_+(e)\bigr)&=B_{K,j}(1)\,Z_K(\alpha;m),\\
    w\bigl(\Th_+(\bar e)\bigr)&=B_{K,j}(\alpha)\,Z_K(1;m).
  \end{aligned}
  \label{eq:eventsmaintext}
\end{equation}
Dividing Eq.~\eqref{eq:eventsmaintext} by Eq.~\eqref{eq:fullfactor}, the free wing cancels:
\begin{equation}
  \begin{split}
    A_+(e)&=\frac{w\bigl(\Th_+(e)\bigr)}{w\bigl(\Th_1(G)\bigr)}
           =\frac{B_{K,j}(1)\,Z_K(\alpha;m)}{Z_K(1;m)\,Z_K(\alpha;m)}\\[2pt]
          &=\frac{B_{K,j}(1)}{Z_K(1;m)}\;\equiv\;\Pkj(1;m),
  \end{split}
  \label{eq:Aratio}
\end{equation}
where $\Pkj(\theta;m)\equiv B_{K,j}(\theta)/Z_K(\theta;m)$ and exchanging the wings gives $A_+(\bar e)=\Pkj(\alpha;m)$. The problem is now one-dimensional: the whole drive dependence of Eq.~\eqref{eq:exacterror} sits in the two numbers $\Pkj(1;m)$ and $\Pkj(\alpha;m)$, and a bound on their ratio will propagate to a bound on the error.

\textit{Results.}---We can now state the position law.
\begin{theorem}[Position law]
\label{thm:main}
Fix $K\ge2$, $\alpha>1$ and $1\le j\le K-1$. For every admissible choice of rate constants and every finite $m>1$,
\begin{equation}
  \epsilon\;>\;\alpha^{-(K-j+1)}=\eeq^{\,K-j+1},
  \label{eq:bound}
\end{equation}
and, allowing the rates and $m$ to vary, $\inf\epsilon=\eeq^{\,K-j+1}$.
\end{theorem}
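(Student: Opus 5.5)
The plan is to reduce the bound to a statement about one wing's tree polynomials in the knob $\theta$, and then to prove a degree-counting inequality for those polynomials.

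\textbf{Step 1: reduce to a ratio of arboreal coefficients.} Set $x=m-1>0$, $a=\Pkj(\alpha;m)$ and $b=\Pkj(1;m)$. By Eqs.~\eqref{eq:exacterror} and \eqref{eq:Aratio}, $\epsilon=\eeq\,(1+xa)/(1+xb)$. Note that $b>0$, since some tree realizes the reverse driven edge. If $a\ge b$, the fraction is at least $1$, which already exceeds $\alpha^{-(K-j)}$. If $a<b$, the fraction satisfies $(1+xa)/(1+xb)>a/b$ strictly. So it suffices to prove
\[
\frac{\Pkj(\alpha;m)}{\Pkj(1;m)}\;\ge\;\alpha^{-(K-j)} .
\]
Strictness of Eq.~\eqref{eq:bound} then comes for free from the ``$+1$'' terms at finite $m$.

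\textbf{Step 2: factor $B_{K,j}$.} In a rooted tree of $C_K$, the path $T_{p_K}$ uses the edge $p_{j+1}\to p_j$. A simple path in the fan-plus-chain graph can reach that edge without visiting $1$ only by running $p_K\to p_{K-1}\to\cdots\to p_{j+1}\to p_j$. Hence every vertex $p_{j+1},\dots,p_K$ uses its down-chain edge. The remaining edges then form an arbitrary rooted spanning tree of $C_j$. This gives
\[
B_{K,j}(\theta)=\Bigl(\prod_{i=j}^{K-1}v_i\Bigr)Z_j(\theta),
\]
which has no $u_j$ and no $m$. The required inequality therefore becomes
\[
\frac{Z_K(\alpha;m)}{Z_j(\alpha)}\;\le\;\alpha^{K-j}\,\frac{Z_K(1;m)}{Z_j(1)} .
\]
Equivalently, $\theta\mapsto Z_K(\theta;m)/\bigl(\theta^{K-j}Z_j(\theta)\bigr)$ is nonincreasing on $[1,\alpha]$. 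A crude count gives only $\alpha^{-(K-1)}$: $Z_K$ has $\theta$-degree at most $K$, and $Z_j$ has degree at least $1$. The refinement must show that the $K-j$ downstream vertices add at most one power of $\theta$ each, relative to $Z_j$.

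\textbf{Step 3: monotonicity by a chain recursion (the main obstacle).} Write $D_r(\theta)=\theta\,\partial_\theta\ln Z_r$, the arboreal mean number of fall-off edges. The goal is $D_K(\theta;m)\le D_j(\theta)+(K-j)$. I would prove $D_{r+1}\le D_r+1$ for each $r\ge j$ by induction on $r$, using a transfer-matrix recursion for fan–chain graphs. Adding $p_{r+1}$ either hangs it as a leaf (via $\theta d_{r+1}$ or $v_r$), or inserts it as an intermediate vertex through which $p_r$ (and possibly further vertices) route.

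This is the step I expect to be hardest. Two routes seem possible:
\begin{itemize}
\item Write $Z_{r+1}=(\theta d_{r+1}+v_r)Z_r+u_r\,Y_r$. Here $Y_r$ counts trees in which $p_r$ routes through $p_{r+1}$; it is $Z_r$-like but with $p_r$'s outgoing edge removed, so it has the structure of trees rooted at $\{1,p_r\}$. Then show each summand has log-derivative at most $D_r+1$, which requires $D(Y_r)\le D_r$.
\item Alternatively, exhibit an injection from trees of $C_{r+1}$ to pairs (tree of $C_r$, one extra edge) that raises the fall-off count by at most one.
\end{itemize}
For $r\ge j$ the driven factor $m$ sits only on the edge $p_j\to p_{j+1}$. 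It enters $Y_j$ linearly as a constant, so it does not spoil the $\theta$-counting.

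\textbf{Step 4: sharpness.} Take $m\to\infty$, so that $\epsilon\to\eeq\,a/b$ whenever $b$ stays bounded below. Then choose rates so that the inequality of Step 3 saturates. Downstream fall-offs $d_{j+1},\dots,d_K$ should dominate in $Z_K$. Upstream, the $C_j$ part should be controlled by a single fall-off, for instance $d_1$ dominant with fast up-chain links toward $p_1$. This makes $Z_K(\theta;m)\approx c\,\theta^{K-j}Z_j(\theta)$ with $Z_j(\theta)\propto\theta$ and yields $\epsilon\to\alpha^{-(K-j+1)}$.

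These choices must respect detailed balance in $\Geq$. Because the wing is a chain with a hub, the cycle condition fixes only one ratio per triangle, leaving enough free rates to impose the needed hierarchy. Combining the limit with the strict bound gives $\inf\epsilon=\eeq^{K-j+1}$.
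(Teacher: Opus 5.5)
Steps~1 and~2 are sound. Step~2 is the paper's forced-tail lemma. Step~1 is slightly leaner than the paper's refined bound: your case split needs only the non-strict $a/b\ge\alpha^{-(K-j)}$ and never uses $b<1$. Step~3 also targets exactly the right inequality, since $D_{r+1}\le D_r+1$ is the paper's one-layer monotonicity (Lemma~\ref{lem:mono}, $q_{r+1}=\theta Z_r/Z_{r+1}$ increasing) written as a log-derivative. But you do not prove it, and it is the whole analytic content of the bound.

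An induction on $D_{r+1}-D_r$ alone, started at $r=j$, cannot close. The inequality at one layer says nothing about the forest weights that feed the next layer, and those depend on the entire upstream chain. In Route~1 the recursion should read $Z_{r+1}=(\theta d_{r+1}+v_r)Z_r+\theta d_{r+1}u_rF_r$, because if $p_r\to p_{r+1}$ then $p_{r+1}$ must fall off. If that factor sits inside your $Y_r$, the condition $D(Y_r)\le D_r$ is false for every choice of rates already at $r=2$. What is actually needed is $D(F_r)\le D(Z_r)$, i.e.\ that $g_r=F_r/Z_r$ is nonincreasing in $\theta$, and that is the missing lemma. The paper gets it from the companion recursion $F_r=Z_{r-1}+u_{r-1}F_{r-1}$. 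This closes the pair into the continued fraction $g_r=1/\bigl(\theta d_r+v_{r-1}/(1+u_{r-1}g_{r-1})\bigr)$ with $g_1=1/(\theta d_1)$. Strict decrease then follows by a short induction from $r=1$, for arbitrary positive rates and hence for the driven wing too. Route~2 would not substitute: an injection that raises the fall-off count by at most one does not bound a weight-averaged count.

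Your sharpness argument relies on a regime that cannot occur. At fixed rates $Z_K(1;m)=Z_K(1)\bigl[1+(m-1)R_1\bigr]$ grows linearly in $m$, where $R_1$ is the equilibrium fraction of trees using $p_j\to p_{j+1}$. Hence $b=P_1/\bigl(1+(m-1)R_1\bigr)\to0$, with $P_\theta=\Pkj(\theta;1)$. Then $\epsilon\to\eeq\,\Gamma_\alpha/\Gamma_1$ with $\Gamma_\theta=1+P_\theta/R_\theta$, which is not $\eeq\,a/b$ in general. Letting the rates move with $m$ does not help. Forcing $a/b\to\alpha^{-(K-j)}$ requires each downstream $v_{r-1}$ to be negligible against $d_r(1+u_{r-1}g_{r-1})$. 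That sends every layer factor $v_{r-1}Z_{r-1}/Z_r$, whose product is $b$, to zero anyway.

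What you need is $(m-1)a,\,(m-1)b\to\infty$ jointly with $a/b\to\alpha^{-(K-j)}$. At large $m$ this amounts to $P_\theta/R_\theta=\prod_{r=j}^{K-1}v_r/\bigl(u_jg_j\Zd\bigr)\to\infty$, where $\Zd$ is the tree weight of the downstream block. So the equilibrium forward driven rate $u_j$ must be tiny against the reverse tail, a condition your heuristic never imposes. The paper settles this with an explicit family: all downstream chain rates equal $\delta$, $u_j=\delta^{K-j+2}$ and $m_\delta=\delta^{-(K-j+1)}$, so that $m_\delta u_j=\delta$. Then $\theta+\delta<Z_r/Z_{r-1}\le\theta+2\delta$ on every downstream layer. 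This pins $\Pkj(\theta;m_\delta)$ between $(\delta/(\theta+2\delta))^{K-j}$ and $(\delta/(\theta+\delta))^{K-j}$, so that $(m_\delta-1)\Pkj(\theta;m_\delta)\gtrsim\delta^{-1}\to\infty$ while the ratio tends to $\alpha^{-(K-j)}$.
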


Ignoring orientations, $C_j^{\downarrow}$ has $K-j+2$ vertices and $2(K-j)+1$ edges, so $b_1(C_j^{\downarrow})=K-j$, which is Eq.~\eqref{eq:depth}.

\textit{Proof.}---Three steps are required. \textit{(i) The forced tail} [Fig.~\ref{fig:mech}(a)].
\begin{lemma}[Forced tail]
\label{lem:tail}
For every $\theta>0$,
$\;B_{K,j}(\theta)=\bigl(\prod_{r=j}^{K-1}v_r\bigr)\,Z_j(\theta)$.
\end{lemma}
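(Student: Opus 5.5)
We prove Lemma~\ref{lem:tail} with a bijection on rooted spanning trees. The key observation is that the condition defining $B_{K,j}$, namely that the tree path $T_{p_K}$ from $p_K$ to $1$ uses the reverse driven edge $p_{j+1}\to p_j$, fixes every outgoing edge of the downstream vertices $p_{j+1},\dots,p_K$. The remaining freedom is then exactly a rooted spanning tree of $C_j$.

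\emph{Step 1 (the tail is forced).} Let $T\in\Th_1(C_K)$ be a tree whose path $T_{p_K}$ contains $p_{j+1}\to p_j$. In the wing the only edges out of $p_r$ lead to $p_{r-1}$, $p_{r+1}$, or $1$. Along $T_{p_K}$ each vertex has a single outgoing edge in $T$, and the path is simple. Before reaching $p_j$ the path can therefore visit only vertices in $\{p_{j+1},\dots,p_K\}$, since leaving to $1$ would end the path. A simple path on a line segment that starts at the top $p_K$ and must end at $p_{j+1}$ can only descend monotonically. Hence $T$ contains $p_r\to p_{r-1}$ for every $r=j+1,\dots,K$, which carries total weight $\prod_{r=j}^{K-1}v_r$. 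These are exactly the out-edges of $p_{j+1},\dots,p_K$, so no other edge of $T$ leaves those vertices.

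\emph{Step 2 (the remainder is a tree of $C_j$).} Delete the forced edges. What remains is one outgoing edge for each of $p_1,\dots,p_j$. We must check that none of these edges points into $\{p_{j+1},\dots,p_K\}$. The only candidate is $p_j\to p_{j+1}$, and including it would close the two-cycle $p_j\to p_{j+1}\to p_j$, which is impossible in a tree. So the remaining edges all lie in $C_j=G[\{1,p_1,\dots,p_j\}]$. They contain no cycle, and each of $p_1,\dots,p_j$ has exactly one out-edge, so every such vertex reaches $1$. The remainder is therefore an element of $\Th_1(C_j)$.

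Conversely, take any $T'\in\Th_1(C_j)$ and add the descending chain $p_K\to\cdots\to p_{j+1}\to p_j$. The result is a rooted spanning tree of $C_K$: each new vertex has one out-edge, it reaches $p_j$ and then $1$, and no cycle is created because $T'$ has no edge into $p_{j+1},\dots,p_K$. Its $p_K$-path contains $p_{j+1}\to p_j$ by construction.

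The two maps are mutually inverse. Tree weights factor as $w_\theta(T)=\bigl(\prod_{r=j}^{K-1}v_r\bigr)\,w_\theta(T')$, because the chain contains no fall-off edge and so carries no $\theta$. Summing over $T'$ gives $B_{K,j}(\theta)=\bigl(\prod_{r=j}^{K-1}v_r\bigr)Z_j(\theta)$.

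Note that $Z_j(\theta)$ has no drive argument: $C_j$ does not contain $u_j$, whose edge joins $p_j$ to $p_{j+1}$. This agrees with the remark that $B_{K,j}$ is independent of $m$.

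The only step that needs care is Step 1, the monotone-descent argument. It has to rule out the path reaching $p_{j+1}$ by some other route, for instance by dipping to $1$ and re-entering, which would contradict $1$ being the terminal root. Once the descent is established, everything else is routine bookkeeping.
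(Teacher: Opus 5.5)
Your proof is correct and follows the paper's own argument. The simple path from $p_K$ can neither fall off to $1$ before reaching $p_j$ nor turn back upward, which forces the descending tail of weight $\prod_{r=j}^{K-1}v_r$. Deleting or adjoining that tail, with $p_j\to p_{j+1}$ excluded by the two-cycle, is then a bijection with $\Th_1(C_j)$ that multiplies each weight by that factor; you are simply more explicit than the paper in checking that the remaining edges lie inside $C_j$.
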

\begin{proof}
Consider a tree counted by $B_{K,j}$. For each $r>j$, once its path from $p_K$ to $1$ reaches $p_r$, it cannot move directly to $1$, since it has not yet reached $p_j$, and, if $r<K$, it cannot return to $p_{r+1}$, since the path is simple. Its next edge must therefore be $p_r\to p_{r-1}$. Iterating from $r=K$ down to $r=j+1$ forces the tail
\[
  p_K\to p_{K-1}\to\cdots\to p_j,
\]
whose weight is $\prod_{r=j}^{K-1}v_r$ [Fig.~\ref{fig:mech}(a)]. Its last edge $p_{j+1}\to p_j$ excludes $p_j\to p_{j+1}$, so deleting $p_{j+1},\ldots,p_K$ and their outgoing edges leaves a tree in $\Th_1(C_j)$. Conversely, adjoining the forced tail to any tree in $\Th_1(C_j)$ produces a tree counted by $B_{K,j}$. These operations are inverse, and adjoining the tail multiplies each tree weight by $\prod_{r=j}^{K-1}v_r$.
\end{proof}

Both sides of the identity in Lemma~\ref{lem:tail} are free of the drive and dividing that identity by $Z_K(\theta;m)$ we arrive at
\begin{equation}
  \Pkj(\theta;m)=\Bigl(\prod\nolimits_{r=j}^{K-1}v_r\Bigr)
  \frac{Z_j(\theta)}{Z_K(\theta;m)}.
  \label{eq:Pfactor}
\end{equation}

\textit{(ii) One layer at a time.} Let $F_r$ be the total weight of the spanning forests of $C_r$ rooted at $\{1,p_r\}$---acyclic, with $p_r$, like $1$, having no outgoing edge [Fig.~\ref{fig:objects}(c)]---and $g_r=F_r/Z_r$. Asking what the last vertex $p_r$ points to---to $1$, or back to $p_{r-1}$---and, in the first case, whether $p_{r-1}$ points at $p_r$, gives two elementary counting identities, depicted in Fig.~\ref{fig:mech}(b,c) and
derived in Appendix~\ref{app:recursions},
\begin{equation}
  \begin{aligned}
    Z_r&=\theta d_r\bigl(Z_{r-1}+u_{r-1}F_{r-1}\bigr)+v_{r-1}Z_{r-1},\\
    F_r&=Z_{r-1}+u_{r-1}F_{r-1},
  \end{aligned}
  \label{eq:recursions}
\end{equation}
for $r\ge2$, with $Z_1=\theta d_1$ and $F_1=1$. Both arguments are suppressed here since the recursions hold verbatim for either wing, and in $G$ as in $\Geq$, with $u_j$ replaced by $m\,u_j$ in the former. Dividing the second identity by the first turns Eq.~\eqref{eq:recursions} into a scalar continued fraction,
\begin{equation}
  g_r=\cfrac{1}{\theta d_r+\cfrac{v_{r-1}}
  {1+u_{r-1}g_{r-1}}},
  \qquad g_1=\frac{1}{\theta d_1}.
  \label{eq:cf}
\end{equation}

\begin{lemma}[One-layer monotonicity]
\label{lem:mono}
For every strictly positive choice of rate constants---hence, at each fixed $m$, for the driven wing exactly as for the undriven one---$g_r(\theta;m)$ is strictly decreasing on $\theta>0$ for every $r$. Consequently $q_r(\theta;m)\equiv\theta\,Z_{r-1}(\theta;m)/Z_r(\theta;m)$ is strictly increasing on $\theta>0$ for every $r\ge2$.
\end{lemma}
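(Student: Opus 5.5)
Both claims follow from the continued fraction \eqref{eq:cf} and the recursions \eqref{eq:recursions}. The proof is by induction on $r$, and the only inputs are positivity of the rates and the fact that $\theta$ enters only through the fall-off terms $\theta d_r$.

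\emph{First claim.} For $r=1$, $g_1=1/(\theta d_1)$ is strictly decreasing in $\theta$. Suppose $g_{r-1}$ is strictly decreasing and positive. Then:
\begin{itemize}
\item $1+u_{r-1}g_{r-1}$ is strictly decreasing, since $u_{r-1}>0$;
\item hence $v_{r-1}/(1+u_{r-1}g_{r-1})$ is strictly increasing, since $v_{r-1}>0$;
\item $\theta d_r$ is strictly increasing;
\item so the denominator $\theta d_r+v_{r-1}/(1+u_{r-1}g_{r-1})$ is a positive, strictly increasing function;
\item therefore its reciprocal $g_r$ is strictly decreasing and positive.
\end{itemize}
Nothing here uses detailed balance or the value of $u_j$. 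The argument therefore applies verbatim with $u_j$ replaced by $m\,u_j$, for any fixed $m>0$.

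\emph{Second claim.} I rewrite $q_r$ using the first recursion in \eqref{eq:recursions}. Since $Z_{r-1}>0$, divide $Z_r=\theta d_r(Z_{r-1}+u_{r-1}F_{r-1})+v_{r-1}Z_{r-1}$ by $Z_{r-1}$:
\[
\frac{Z_r}{Z_{r-1}}=\theta d_r\bigl(1+u_{r-1}g_{r-1}\bigr)+v_{r-1}.
\]
It follows that
\[
\frac{1}{q_r}=\frac{Z_r}{\theta Z_{r-1}}=d_r\bigl(1+u_{r-1}g_{r-1}(\theta)\bigr)+\frac{v_{r-1}}{\theta}.
\]
Both terms on the right are strictly decreasing in $\theta$:
\begin{itemize}
\item the first because $g_{r-1}$ is strictly decreasing (first claim, for $r-1\ge1$) and $d_r,u_{r-1}>0$;
\item the second because $v_{r-1}>0$.
\end{itemize}
So $1/q_r$ is positive and strictly decreasing, and $q_r$ is strictly increasing on $\theta>0$. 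This holds for every $r\ge2$. In the driven wing, $u_{j}$ simply appears as $m\,u_j$ in the $r=j+1$ step, which changes nothing.

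There is no real obstacle. The only care needed is to check that every coefficient is strictly positive, so that each monotonicity is strict, and that $g_r>0$, which keeps the reciprocals well defined. One could instead try to prove the second claim directly from tree counts, but that would obscure the argument; the continued-fraction route reduces everything to composing monotone maps.
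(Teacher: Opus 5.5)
Your proof is correct and follows the paper's argument essentially line for line: induction on the continued fraction Eq.~\eqref{eq:cf} for $g_r$, then dividing the first recursion by $Z_{r-1}$ to get $1/q_r = v_{r-1}/\theta + d_r + d_r u_{r-1} g_{r-1}$, in which both $\theta$-dependent terms strictly decrease. You also track explicitly the positivity of $g_{r-1}$ and the fact that $m\,u_j$ enters only at the $r=j+1$ step, which the paper's shorter version leaves implicit.
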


\begin{proof}
Induct on Eq.~\eqref{eq:cf}. For $r=1$, $g_1=1/(\theta d_1)$ is strictly decreasing. If $g_{r-1}$ is strictly decreasing, then $1+u_{r-1}g_{r-1}$ decreases, so $v_{r-1}/(1+u_{r-1}g_{r-1})$ increases, while $\theta d_r$ strictly increases; the denominator of Eq.~\eqref{eq:cf} therefore strictly increases and $g_r$ strictly decreases. For $q_r$, the first identity in Eq.~\eqref{eq:recursions} reads $Z_r=Z_{r-1}\left[v_{r-1}+\theta d_r(1+u_{r-1}g_{r-1})\right]$, whence $q_r=\left[v_{r-1}/\theta+d_r+d_r u_{r-1}g_{r-1}\right]^{-1}$, in which both $\theta$-dependent terms strictly decrease.
\end{proof}
\begin{figure}[t]
  \centering
  \includegraphics[width=\linewidth]{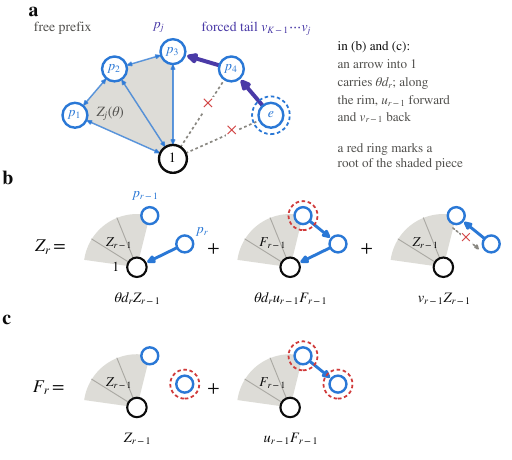}
\caption{\textbf{Counting trees one vertex at a time.} Every nonroot vertex of a spanning tree spends exactly one outgoing edge. \textbf{(a)}~The forced tail (Lemma~\ref{lem:tail}) for $K=5$, $j=3$: one wing of Fig.~\ref{fig:law}(a), drawn upright. Falling off at $e$ or at $p_4$ (red crosses) would end the path from $e$ to $1$ before it reached $p_3$, so the tail $p_5\to p_4\to p_3$ is forced, of weight $v_4v_3$, while the trees on $\{1,p_1,p_2,p_3\}$ are left free: $B_{5,3}=v_4v_3\,Z_3$. \textbf{(b)}~$Z_r$, the trees of $C_r$ rooted at $1$. Its last vertex $p_r$
  spends its edge either on $1$ (weight $\theta d_r$) or on $p_{r-1}$ (weight $v_{r-1}$). In the first case ask whether $p_{r-1}$ points back at $p_r$: it does not ($Z_{r-1}$), or it does ($u_{r-1}F_{r-1}$). In the second case, the reverse edge is barred (dotted), because a tree has no two-cycle, and $Z_{r-1}$ remains. These three cases are the three terms of the first line of Eq.~\eqref{eq:recursions}, in that order. \textbf{(c)}~$F_r$, the forests of $C_r$ rooted at $\{1,p_r\}$. Now $p_r$ spends nothing, so only $p_{r-1}$ is in question: it does not point at $p_r$ ($Z_{r-1}$), or it does ($u_{r-1}F_{r-1}$)---the two terms of the second line.}
  \label{fig:mech}
\end{figure}

\textit{(iii) Assembly.} Set $\nu=K-j$. Telescoping $q_r=\theta Z_{r-1}/Z_r$ over $r=j+1,\ldots,K$ gives
$Z_j/Z_K=\theta^{-\nu}\prod_{r>j}q_r$, and inserting this in Eq.~\eqref{eq:Pfactor} yields the central identity
\begin{equation}
  \frac{\Pkj(\alpha;m)}{\Pkj(1;m)}
  \;=\;\alpha^{-\nu}\prod_{r=j+1}^{K}\frac{q_r(\alpha;m)}{q_r(1;m)}
  \;>\;\alpha^{-\nu},
  \label{eq:centralratio}
\end{equation}
every factor exceeding one by Lemma~\ref{lem:mono}---and doing so at every value of $m$, which is what makes the bound uniform in the drive.

Finally, put $a=\Pkj(\alpha;m)$, $b=\Pkj(1;m)$, $c=m-1>0$ and $s=\alpha^{-\nu}$. All rates are positive, so $b>0$, while trees with $p_K\to1$ are excluded from $B_{K,j}$ and carry positive weight, so $b<1$; combined with $a>sb$, which is Eq.~\eqref{eq:centralratio}, this forces $(1+ca)/(1+cb)>(1+sc)/(1+c)=s+(1-s)/m$ (Appendix~\ref{app:refined}). Hence Eq.~\eqref{eq:exacterror} sharpens Eq.~\eqref{eq:bound} to
\begin{equation}
  \epsilon\;>\;\eeq^{\,K-j+1}+e^{-\Delta\mu}\bigl(\eeq-\eeq^{\,K-j+1}\bigr)
  \;>\;\eeq^{\,K-j+1},
  \label{eq:refined}
\end{equation}
since $\eeq>\eeq^{\,K-j+1}$. This is Eq.~\eqref{eq:bound}. Appendix~\ref{app:sharp} gives sharpness. $\square$

\textit{The two endpoints.}---The two extreme placements are worth naming.
\begin{corollary}[Global synergy]
\label{cor:synergy}
For $j=1$, $b_1(C_1^{\downarrow})=K-1$ and $\inf\epsilon=\eeq^{K}$: all $K-1$ independent cycles cooperate and a single driven transition unlocks the full $K$-fold discriminatory capacity of the network.
\end{corollary}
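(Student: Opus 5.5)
Corollary~\ref{cor:synergy} follows by specializing Theorem~\ref{thm:main} to $j=1$ and then reading off the topology. Setting $j=1$ in Eq.~\eqref{eq:bound} gives $\epsilon>\eeq^{\,K}$ for every admissible choice of rates and every finite $m>1$. The sharpness clause of the theorem, established in Appendix~\ref{app:sharp}, gives $\inf\epsilon=\eeq^{\,K}$. Nothing new is needed for the error statement beyond checking that $j=1$ lies in the admissible range $1\le j\le K-1$, which holds because $K\ge2$.

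For the topological statement, I will compute $b_1(C_1^{\downarrow})$ directly. By definition $C_1^{\downarrow}$ is the induced subgraph on $\{1,p_1,\ldots,p_K\}$, so it is the entire correct wing $C_K$. Ignoring orientations, it has $K+1$ vertices and $2K-1$ edges: $K$ spokes $\{1,p_r\}$ and $K-1$ rungs $\{p_r,p_{r+1}\}$. It is connected, so its first Betti number is
\[
b_1 = E - V + 1 = (2K-1)-(K+1)+1 = K-1 .
\]
This matches the general count $b_1(C_j^{\downarrow})=K-j$ stated after the theorem. The $K-1$ triangular cells $\{1,p_r,p_{r+1}\}$, $r=1,\ldots,K-1$, form a cycle basis. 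Hence $\eta=1+b_1=K$, so every independent cycle of the wing lies at or downstream of the drive, and none is in the "gray" upstream set of Fig.~\ref{fig:law}. This justifies the phrase that all $K-1$ cycles cooperate.

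The only substantive ingredient is sharpness, which the statement lets me assume. To make the corollary self-contained in spirit, I would sketch why the infimum is approached at $j=1$. In Eq.~\eqref{eq:centralratio} each factor $q_r(\alpha;m)/q_r(1;m)$ should tend to one. This happens when $v_{r-1}/\theta$ dominates in $q_r$, which makes $q_r\approx\theta/v_{r-1}$ so that the $\theta$-dependence is absorbed into the explicit $\alpha^{-\nu}$. The regime is small fall-off rates $d_r$ relative to the reverse rates $v_{r-1}$, chosen consistently with detailed balance. Simultaneously, in Eq.~\eqref{eq:exacterror} one needs $m\to\infty$ with $(m-1)\Pkj(1;m)$ large. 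This requires $\Pkj(1;m)$ not to decay as fast as $1/m$, even though $u_1$ is multiplied by $m$ inside $Z_K(1;m)$. The ratio then tends to $a/b\to\alpha^{-(K-1)}$, giving $\epsilon\to\eeq^{K}$.

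I expect this balancing of limits to be the main obstacle. The difficulty is that the $m$-dependence of $Z_K$ enters through $g_1$, so the limits must be ordered carefully. Since Appendix~\ref{app:sharp} already handles this for general $j$, I would simply cite it rather than redo it.
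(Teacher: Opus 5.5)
Your deductive argument is correct and is the paper's own: the corollary is Theorem~\ref{thm:main} at $j=1$ (admissible because $K\ge2$), with $\inf\epsilon=\eeq^{K}$ inherited from the sharpness family of Appendix~\ref{app:sharp}. Your count $b_1(C_1^{\downarrow})=(2K-1)-(K+1)+1=K-1$ is the paper's vertex and edge count evaluated at $j=1$.

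Your optional sharpness sketch, however, names the wrong regime, and you should drop it or reverse it. Suppose $v_{r-1}/\theta$ dominates $q_r=[v_{r-1}/\theta+d_r+d_r u_{r-1}g_{r-1}]^{-1}$. Then $q_r\approx\theta/v_{r-1}$ and $q_r(\alpha;m)/q_r(1;m)\approx\alpha$. The layer's $\theta$-dependence is not absorbed into $\alpha^{-\nu}$; it cancels that layer's factor $\alpha^{-1}$. Equivalently, $\rho_r=(v_{r-1}/\theta)q_r\approx1$ on both wings, so the forced tail is nearly certain, $\Pkj(\alpha;m)/\Pkj(1;m)\to1$, and the downstream checkpoints do no proofreading at all.

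Each factor in \eqref{eq:centralratio} tends to one only when the fall-off term dominates, $q_r\approx1/d_r$. That requires $v_{r-1}\ll\theta d_r$ and $u_{r-1}g_{r-1}\ll1$, with $m u_j$ in place of $u_j$ at $r=j+1$. Then $\rho_r\approx v_{r-1}/(\theta d_r)$, and each backward step of the tail pays a factor $\alpha^{-1}$, because it must outrun a fall-off that is $\alpha$ times faster on the incorrect wing. This is exactly the paper's family: $d_r=1$ and $v_r=u_r=m_\delta u_j=\delta$ downstream.

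This regime also makes the tension you noticed quantitative. There $\Pkj$ is $O(\delta^{\nu})$, and $(m-1)\Pkj(\theta;m)<P_\theta/R_\theta$ for every $m$. Making $(m-1)\Pkj$ diverge therefore requires both $m\gg\delta^{-\nu}$ and an undriven $u_j$ small enough that $R_\theta\ll P_\theta$. That is why the paper takes $u_j=\delta^{\nu+2}$ and $m_\delta=\delta^{-(\nu+1)}$.
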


\begin{corollary}[Information-theoretic short circuit]
\label{cor:short}
For $j=K-1$, $b_1(C_{K-1}^{\downarrow})=1$ and $\inf\epsilon=\eeq^{2}$, independently of $K$: however many discriminatory checkpoints lie upstream, they are stranded, and an arbitrarily deep cascade performs no better than a two-checkpoint scheme.
\end{corollary}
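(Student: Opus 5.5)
The plan is to obtain this as the special case $j=K-1$ of Theorem~\ref{thm:main}, plus a short topological count. I would not build any new estimate. The admissible range $1\le j\le K-1$ contains $j=K-1$ for every $K\ge2$. Substituting into Eq.~\eqref{eq:bound} gives $\epsilon>\alpha^{-2}=\eeq^{2}$ for every admissible choice of rates and every finite $m>1$. The sharpness clause of the theorem then gives $\inf\epsilon=\eeq^{2}$. Neither the exponent $K-j+1=2$ nor the bound depends on $K$, which is the claimed independence of depth.

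For the Betti number, I would identify $C_{K-1}^{\downarrow}$ as the induced subgraph on $\{1,p_{K-1},p_K\}$. Ignoring orientation, its edges are the three pairs $\{1,p_{K-1}\}$, $\{1,p_K\}$ and $\{p_{K-1},p_K\}$. The graph is connected, so $b_1=3-3+1=1$. This agrees with the general count $2(K-j)+1$ edges on $K-j+2$ vertices stated after the theorem.

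As a consistency check on why the upstream layers drop out, I would also write the central identity Eq.~\eqref{eq:centralratio} at $\nu=1$:
\[
\frac{P_{K,K-1}(\alpha;m)}{P_{K,K-1}(1;m)}=\alpha^{-1}\,\frac{q_K(\alpha;m)}{q_K(1;m)}.
\]
Only the single factor $q_K$ carries the drive-layer information. By Lemma~\ref{lem:mono}, all upstream discrimination enters only through $g_{K-1}$ inside $q_K$, and that factor can only push the ratio above $\alpha^{-1}$. So the layers $p_1,\ldots,p_{K-2}$ cannot lower the exponent however discriminatory they are. That is the sense of ``stranded.''

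The only step needing care is attainment of the infimum, and I would simply invoke the sharpness part of Theorem~\ref{thm:main} (Appendix~\ref{app:sharp}). If I had to verify it directly, I would take $v_{K-1}\to0$ and $u_{K-1}\to0$, so that $q_K\to1/d_K$ on both wings and the ratio of $q_K$ values tends to $1$. I would then choose $m\to\infty$ so that $(m-1)P_{K,K-1}(1;m)$ stays large. Eq.~\eqref{eq:exacterror} then tends to $\eeq\cdot\alpha^{-1}=\eeq^{2}$. The obstacle is ordering these limits: $P_{K,K-1}$ contains the factor $v_{K-1}$ and the drive multiplies $u_{K-1}$, so $m$ must grow fast enough relative to $v_{K-1}$ and $u_{K-1}$. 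The appendix construction handles exactly this.
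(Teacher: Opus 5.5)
Your proposal is correct and follows the paper: the corollary is Theorem~\ref{thm:main} at $j=K-1$, with the strict bound giving $\epsilon>\eeq^{2}$ and the sharpness family of Appendix~\ref{app:sharp} giving the infimum, plus the count $b_1(C_{K-1}^{\downarrow})=3-3+1=1$. One correction to your optional direct check: the drive turns $u_{K-1}$ into $m\,u_{K-1}$ inside $q_K(\theta;m)$, so you need $m\,v_{K-1}\to\infty$ \emph{and} $m\,u_{K-1}\to0$ (e.g.\ $v_{K-1}=\delta$, $u_{K-1}=\delta^{3}$, $m=\delta^{-2}$, the $\nu=1$ member of Eq.~\eqref{eq:family}), which means $m$ must grow slowly relative to $1/u_{K-1}$, not fast. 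If instead $m\,u_{K-1}\to\infty$ with the upstream rates fixed, $q_K(\alpha;m)/q_K(1;m)$ tends to $g_{K-1}(1)/g_{K-1}(\alpha)>1$ rather than $1$; for $K=2$ this limit is exactly $\alpha$, which cancels the gain entirely.
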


\begin{figure}[t]
  \centering
  \includegraphics[width=\linewidth]{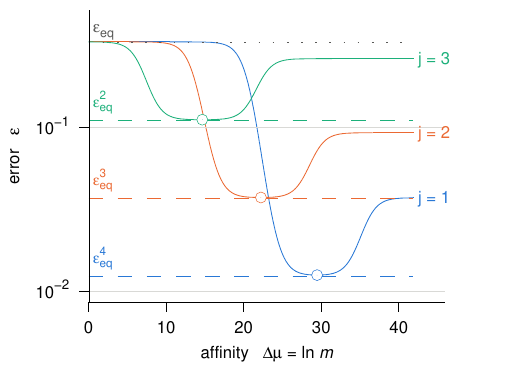}
\caption{\textbf{Position caps what affinity can buy.}
Error against affinity $\Delta\mu=\ln m$ for a $K=4$ network at $\alpha=3$, one curve per position $j$ of the driven transition; rates from the family of Eq.~\eqref{eq:family} at $\delta=10^{-3}$, each curve using the member belonging to its own $j$. Each curve starts at $\eeq$, reaches its best error at $\Delta\mu^{*}=\ln m^{*}$ (circles, Theorem~\ref{thm:opt}), then rises again: past the optimum, more drive makes the error worse. Dashed lines are the floors $\eeq^{\,K-j+1}$ of Theorem~\ref{thm:main}, which no curve crosses at any affinity. Curves are computed from Eq.~\eqref{eq:master} and checked against a direct numerical solution of the nine-state master equation.}
\label{fig:tradeoff}
\end{figure}

\textit{The energy--accuracy relation.}---Theorem~\ref{thm:main} fixes the limit set by \emph{where} the energy is spent. We now ask how the error depends on the driving affinity $\Delta\mu=\ln m$. The whole $m$-dependence of Eq.~\eqref{eq:Pfactor} sits in $Z_K(\theta;m)$. Since a rooted spanning tree spends exactly one outgoing edge per nonroot vertex, it can have $p_j\to p_{j+1}$ at most once, so $Z_K(\theta;m)$ is affine in $m$. Accordingly, $\Pkj(\theta;m)=B_{K,j}(\theta)/Z_K(\theta;m)$ can be expressed as (Appendix~\ref{app:dilution})
\begin{equation}
 \Pkj(\theta;m)=\frac{P_\theta}{1+(m-1)\,R_\theta}\;,
  \qquad
  P_\theta\equiv\Pkj(\theta),
  \label{eq:dilute}
\end{equation}
where \(R_\theta\) is the arboreal probability, in $\Geq$, that a rooted spanning tree of $C_K$ contains \(p_j\to p_{j+1}\). Here the convention earns its keep: the left-hand side is the driven quantity of Eq.~\eqref{eq:Aratio}, while $P_\theta$ and $R_\theta$ on the right belong to $\Geq$.

\textit{The drive dependence, exactly.}---Using Eq.~\eqref{eq:Aratio} and substituting Eq.~\eqref{eq:dilute} into Eq.~\eqref{eq:exacterror}, we find that the entire affinity dependence of a $K$-checkpoint network collapses onto four equilibrium numbers,
\begin{equation}
  \epsilon(m)=\eeq\,\frac{\Phi_\alpha(c)}{\Phi_1(c)},
  \qquad
  \Phi_\theta(c)=\frac{1+c\,(R_\theta+P_\theta)}{1+c\,R_\theta},
  \label{eq:master}
\end{equation}
with $c=m-1$. Each $\Phi_\theta$ increases monotonically from $1$ at $c=0$ to the saturation gain $\Gamma_\theta\equiv\Phi_\theta(\infty)=1+P_\theta/R_\theta$ as $c\to\infty$. All four numbers---$P_1,P_\alpha,R_1,R_\alpha$---belong to $\Geq$: the drive appears in Eq.~\eqref{eq:master} through $c$ and nowhere else.

\begin{lemma}[Initial descent, and no antiproofreading]
\label{lem:first}
For every admissible choice of rates, $P_\alpha<P_1$, so that $\left.\mathrm{d}\epsilon/\mathrm{d}m\right|_{m=1}=\eeq(P_\alpha-P_1)<0$. Moreover $\Pkj(\alpha;m)<\Pkj(1;m)$ at every $m$, so $\epsilon(m)<\eeq$ for every $m>1$. In other words, antiproofreading ($\epsilon(m)>\eeq$) is impossible (Appendix~\ref{app:signs}).
\end{lemma}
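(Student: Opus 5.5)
Lemma~\ref{lem:first} has three claims: $P_\alpha<P_1$, the value of the derivative at $m=1$, and $\Pkj(\alpha;m)<\Pkj(1;m)$ for every $m>1$. The plan is to get all three from Eq.~\eqref{eq:Pfactor} together with one new monotonicity fact. That fact is that $\theta\mapsto \Pkj(\theta;m)$ is strictly decreasing on $\theta>0$ at every fixed $m\ge1$. Once we have it, setting $m=1$ gives $P_\alpha<P_1$, and fixing $m>1$ gives $\Pkj(\alpha;m)<\Pkj(1;m)$.

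To prove the monotonicity, telescope exactly as in step (iii), but keep the powers of $\theta$ attached to the $q_r$. By Eq.~\eqref{eq:Pfactor},
\[
\Pkj(\theta;m)=\Bigl(\prod\nolimits_{r=j}^{K-1}v_r\Bigr)\prod_{r=j+1}^{K}\frac{Z_{r-1}(\theta;m)}{Z_r(\theta;m)},
\]
and each factor can be rewritten as $Z_{r-1}/Z_r=\bigl[v_{r-1}+\theta d_r(1+u_{r-1}g_{r-1})\bigr]^{-1}$ using the first line of Eq.~\eqref{eq:recursions}. This is the same algebra used for $q_r$ in Lemma~\ref{lem:mono}, with the prefactor $\theta$ removed. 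By Lemma~\ref{lem:mono}, $g_{r-1}$ is strictly decreasing in $\theta$, so $\theta(1+u_{r-1}g_{r-1})=\theta+u_{r-1}\theta g_{r-1}$ is increasing in $\theta$ provided $\theta g_{r-1}$ is nondecreasing. I would therefore first establish, by induction on Eq.~\eqref{eq:cf}, that $\theta g_r$ is nondecreasing. The base case holds because $\theta g_1=1/d_1$ is constant. For the inductive step, write
\[
\theta g_r=\Bigl[d_r+\frac{v_{r-1}}{\theta+u_{r-1}\theta g_{r-1}}\Bigr]^{-1},
\]
which is increasing whenever $\theta g_{r-1}$ is nondecreasing, since the denominator $\theta+u_{r-1}\theta g_{r-1}$ then strictly increases. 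With this in hand, each $Z_{r-1}/Z_r$ is strictly decreasing in $\theta$, the product is strictly decreasing, and the monotonicity claim follows. The argument holds verbatim for the driven wing with $u_j\to m u_j$, as Lemma~\ref{lem:mono} notes. This auxiliary induction on $\theta g_r$ is the only new ingredient, and it is the step I would check most carefully.

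For the derivative, Eq.~\eqref{eq:exacterror} together with Eq.~\eqref{eq:Aratio} gives $\epsilon=\eeq\,\bigl(1+c\,\Pkj(\alpha;m)\bigr)/\bigl(1+c\,\Pkj(1;m)\bigr)$ with $c=m-1$. Differentiating at $c=0$ kills every term carrying a factor $c$, including the $m$-dependence of the $\Pkj$ themselves. What remains is $\eeq\bigl(\Pkj(\alpha;1)-\Pkj(1;1)\bigr)=\eeq(P_\alpha-P_1)$, which is negative by the first part.

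For the absence of antiproofreading, fix $m>1$, so $c>0$, and use $\Pkj(\alpha;m)<\Pkj(1;m)$. Then the numerator $1+c\,\Pkj(\alpha;m)$ is strictly smaller than the denominator $1+c\,\Pkj(1;m)$, and both are positive, so $\epsilon(m)<\eeq$. The same conclusion also follows from Eq.~\eqref{eq:master}: there $\Phi_\alpha(c)<\Phi_1(c)$ is equivalent to $\Pkj(\alpha;m)<\Pkj(1;m)$ through Eq.~\eqref{eq:dilute}, so it can serve as a consistency check.
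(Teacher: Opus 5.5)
Your proposal is correct and matches the paper's argument in Appendix~\ref{app:signs} essentially step for step. The paper likewise proves by induction that $h_r=\theta g_r$ is nondecreasing. From this it deduces that each layer factor $v_{r-1}Z_{r-1}/Z_r$ strictly decreases in $\theta$ for arbitrary positive rates, hence also with $u_j\to m\,u_j$, and it reads off the derivative and $\epsilon(m)<\eeq$ from the exact error formula; the only cosmetic difference is that the paper differentiates Eq.~\eqref{eq:master} at $c=0$ rather than Eq.~\eqref{eq:exacterror} directly.
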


By Lemma~\ref{lem:tail}, $P_\theta$ factorizes over layers as $\prod_{r=j+1}^{K}\rho_r(\theta)$, with $\rho_r=v_{r-1}Z_{r-1}/Z_r$. In Appendix~\ref{app:signs}, we show that $\alpha^{-\nu}<P_\alpha/P_1<1$. Lemma~\ref{lem:first} says that the error falls as soon as the system departs from equilibrium. Whether it then rises again past an optimum or falls monotonically to its limiting value is settled by a single sign [Fig.~\ref{fig:tradeoff}]:

\begin{theorem}[Optimal drive]
\label{thm:opt}
Fix $K\ge2$, $\alpha>1$, $1\le j\le K-1$ and any admissible rates, and put, in
$\Geq$,
\begin{equation}
  \begin{aligned}
    Q_0&=P_\alpha-P_1, \qquad Q_1=2\bigl(P_\alpha R_1-P_1R_\alpha\bigr),\\
    Q_2&=P_\alpha R_1\bigl(R_1+P_1\bigr)-P_1R_\alpha\bigl(R_\alpha+P_\alpha\bigr).
  \end{aligned}
  \label{eq:Qcoef}
\end{equation}
Then $\mathrm{d}\epsilon/\mathrm{d}m$ has the sign of $Q(c)=Q_2c^2+Q_1c+Q_0$. Moreover, $Q_0<0$ (Lemma~\ref{lem:first}), $Q_1\le0$ (Appendix~\ref{app:Q1}). If $Q_2>0$, then $\epsilon$ strictly decreases on
$(1,m^*)$ and strictly increases on $(m^*,\infty)$, where $m^*=1+c^*$ and
\begin{equation}
  c^*=\frac{-Q_1+\sqrt{Q_1^{2}-4Q_0Q_2}}{2Q_2}
  \label{eq:cstar}
\end{equation}
is the unique positive root of $Q$; the minimum is global and equals
\begin{equation}
  \epsilon(m^*)=\eeq\,\frac{P_\alpha}{P_1}
  \left(\frac{1+c^*R_1}{1+c^*R_\alpha}\right)^{2}.
  \label{eq:valstar}
\end{equation}
(See, Fig.~\ref{fig:tradeoff}). If $Q_2\le0$, then $\epsilon$ is strictly decreasing on $(1,\infty)$, toward $\eeq\,\Gamma_\alpha/\Gamma_1$ (Appendix~\ref{app:optimum}).
\end{theorem}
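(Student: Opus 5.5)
We prove Theorem~\ref{thm:opt} by a direct computation from Eq.~\eqref{eq:master}, followed by a sign analysis of a quadratic. All four numbers $P_1,P_\alpha,R_1,R_\alpha$ are fixed equilibrium quantities, and $m$ enters only through $c=m-1$, so $\mathrm{d}\epsilon/\mathrm{d}m=\mathrm{d}\epsilon/\mathrm{d}c$.

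\textbf{Step 1: the sign of the derivative.} Write $\Phi_\theta(c)=N_\theta/D_\theta$ with $N_\theta=1+c(R_\theta+P_\theta)$ and $D_\theta=1+cR_\theta$. Then
\[
\epsilon=\eeq\,\frac{N_\alpha D_1}{D_\alpha N_1}.
\]
The logarithmic derivative is a sum of four terms $\pm(\text{slope})/(\text{affine factor})$. Each pair $\ln N_\theta-\ln D_\theta$ has derivative $P_\theta/(N_\theta D_\theta)$. Hence
\[
\frac{\epsilon'}{\epsilon}=\frac{P_\alpha}{N_\alpha D_\alpha}-\frac{P_1}{N_1D_1}.
\]
Multiplying by the positive quantity $N_\alpha D_\alpha N_1 D_1$, the sign of $\epsilon'$ is that of
\[
P_\alpha N_1D_1-P_1N_\alpha D_\alpha .
\]
Expand in $c$:
\begin{itemize}
\item the constant term is $P_\alpha-P_1=Q_0$;
\item the linear term is $P_\alpha(2R_1+P_1)-P_1(2R_\alpha+P_\alpha)=2(P_\alpha R_1-P_1R_\alpha)=Q_1$;
\item the quadratic term is $P_\alpha R_1(R_1+P_1)-P_1R_\alpha(R_\alpha+P_\alpha)=Q_2$.
\end{itemize}
This gives the first claim.

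\textbf{Step 2: the signs of $Q_0$ and $Q_1$.} $Q_0<0$ is Lemma~\ref{lem:first}. For $Q_1\le0$ we need $P_\alpha/P_1\le R_\alpha/R_1$. The paper defers this to Appendix~\ref{app:Q1}, which I would prove as follows. Note that $R_\theta=m\,u_j\,\partial_{u_j}\ln Z_K$ evaluated at $m=1$. Equivalently, $R_\theta$ is the weight of trees containing $p_j\to p_{j+1}$ divided by $Z_K$, namely $u_jF'_\theta/Z_K$.

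Here $F'_\theta$ is a forest-type count: $p_j$ points to $p_{j+1}$, so the forest on $\{1,\dots,p_j\}$ is rooted at $\{1,p_j\}$ and the downstream part is a tree. Then
\[
\frac{R_\alpha/R_1}{P_\alpha/P_1}
\]
becomes a ratio of one-wing quantities of the form $F_j(\theta)\cdot(\text{downstream})/Z_j(\theta)$. By Lemma~\ref{lem:mono}, $g_j=F_j/Z_j$ decreases in $\theta$, while the downstream continued fractions carry the same monotone $\theta$-factors as in $P_\theta$. I would try to reduce it to $g_j$ factors together with a downstream ratio handled by the same continued-fraction induction as Lemma~\ref{lem:mono}.

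\textbf{Step 3: the case analysis.} Suppose $Q_2>0$. Since $Q(0)=Q_0<0$ and the leading coefficient is positive, $Q$ has exactly one positive root, and Vieta's formulas (the product of the roots is $Q_0/Q_2<0$) fix it as Eq.~\eqref{eq:cstar}. Then $Q<0$ on $(0,c^*)$ and $Q>0$ beyond it, which gives the decrease–increase shape and makes the minimum global.

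For the value at the minimum, use $Q(c^*)=0$, that is, $P_\alpha N_1D_1=P_1N_\alpha D_\alpha$. This gives $N_\alpha/N_1=(P_\alpha D_1)/(P_1D_\alpha)$, and substituting into $\epsilon=\eeq (N_\alpha/N_1)(D_1/D_\alpha)$ yields Eq.~\eqref{eq:valstar}.

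Now suppose $Q_2\le0$. With $Q_0<0$ and $Q_1\le0$, we get $Q(c)<0$ for all $c>0$, so $\epsilon$ is strictly decreasing. Its limit as $c\to\infty$ is $\eeq\,\Gamma_\alpha/\Gamma_1$, read off from Eq.~\eqref{eq:master}.

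The main obstacle is Step 2, the inequality $P_\alpha R_1\le P_1R_\alpha$. Everything else is algebra.
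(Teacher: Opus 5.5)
Your Steps~1 and~3 are correct and follow the paper's own computation. Your $N_\theta D_\theta$ is its $\mathcal D_\theta$, and the value at $c^*$ follows from $Q(c^*)=0$ exactly as you derive it. The gap is Step~2, and the route you sketch would not close it, because it points the wrong way. Your decomposition of the trees using $p_j\to p_{j+1}$ is right. It gives $R_\theta=u_jF_j(\theta)\Zd(\theta)/Z_K(\theta)$, where $\Zd$ is the rooted-tree weight of $C_{j+1}^{\downarrow}$. Hence $P_\theta/R_\theta=\bigl(\prod_{r=j}^{K-1}v_r\bigr)/\bigl(u_j\,g_j(\theta)\,\Zd(\theta)\bigr)$, and $Q_1\le0$ is equivalent to $g_j(\alpha)\Zd(\alpha)\ge g_j(1)\Zd(1)$. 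Lemma~\ref{lem:mono} says $g_j$ strictly \emph{decreases} in $\theta$, which works against this inequality, while $\Zd$ increases. The two factors pull in opposite directions, so monotonicity of each one separately cannot fix the sign of the product. A downstream continued-fraction induction of the Lemma~\ref{lem:mono} type would only give you that kind of separate monotonicity. You need to compare the rates at which the two factors change.

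The missing idea is to move one power of $\theta$ from one factor to the other: $g_j\Zd=(\theta g_j)\cdot(\Zd/\theta)$.

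For the first factor, $h_r=\theta g_r$ obeys $h_r=\bigl[d_r+v_{r-1}/(\theta+u_{r-1}h_{r-1})\bigr]^{-1}$ with $h_1=1/d_1$. A separate induction on this recursion, not Lemma~\ref{lem:mono}, shows $h_j$ is nondecreasing, i.e.\ $g_j$ decays no faster than $1/\theta$.

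For the second factor, every spanning tree of $C_{j+1}^{\downarrow}$ rooted at $1$ contains at least one fall-off edge, since the only edges entering the root are fall-offs. So $\Zd$ is a polynomial in $\theta$ with nonnegative coefficients and no constant term. Then $\Zd/\theta$ is again a polynomial with nonnegative coefficients, hence nondecreasing.

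The product of these two nondecreasing positive factors is nondecreasing, which is $Q_1\le0$. This is not a cosmetic omission. Besides being an explicit claim of the theorem, $Q_1\le0$ is exactly what your $Q_2\le0$ branch needs. If $Q_2=0<Q_1$, or if $Q_2<0<Q_1$ with real roots (both then positive), $Q$ would turn positive and $\epsilon$ would increase on some interval. Your $Q_2>0$ branch, by contrast, needs only $Q_0<0$.
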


\textit{Conclusions.}---We have shown that energy expenditure acts as a topological gate: the placement of the drive determines which part of a network's cycle space can contribute to information processing. In the butterfly graph, the effective proofreading depth is one plus the first Betti number of the subgraph lying at or downstream of the energetic edge. Moreover, past an optimal affinity, further driving can impair rather than improve error correction. What accuracy a nonequilibrium network can reach is therefore governed not only by how far it is driven from equilibrium, but also by where the drive enters it.


\bibliographystyle{apsrev4-2}
\bibliography{refs}

@incollection{Pauling1957,
  author    = {Linus Pauling},
  title     = {The probability of errors in the process of synthesis of protein molecules},
  booktitle = {Festschrift Arthur Stoll},
  pages     = {597--602},
  publisher = {Birkh\"auser Verlag},
  address   = {Basel},
  year      = {1957}
}

@article{Loftfield1972,
  author  = {R. B. Loftfield and D. Vanderjagt},
  title   = {The frequency of errors in protein biosynthesis},
  journal = {Biochem. J.},
  volume  = {128},
  pages   = {1353--1356},
  year    = {1972},
  doi     = {10.1042/bj1281353}
}

@article{Hopfield1974,
  author  = {J. J. Hopfield},
  title   = {Kinetic proofreading: A new mechanism for reducing errors in biosynthetic processes requiring high specificity},
  journal = {Proc. Natl. Acad. Sci. U.S.A.},
  volume  = {71},
  pages   = {4135--4139},
  year    = {1974},
  doi     = {10.1073/pnas.71.10.4135}
}

@article{Ninio1975,
  author  = {J. Ninio},
  title   = {Kinetic amplification of enzyme discrimination},
  journal = {Biochimie},
  volume  = {57},
  pages   = {587--595},
  year    = {1975},
  doi     = {10.1016/S0300-9084(75)80139-8}
}

@article{Bennett1979,
  author  = {C. H. Bennett},
  title   = {Dissipation-error tradeoff in proofreading},
  journal = {BioSystems},
  volume  = {11},
  pages   = {85--91},
  year    = {1979},
  doi     = {10.1016/0303-2647(79)90003-0}
}

@article{Ehrenberg1980,
  author  = {M. Ehrenberg and C. Blomberg},
  title   = {Thermodynamic constraints on kinetic proofreading in biosynthetic pathways},
  journal = {Biophys. J.},
  volume  = {31},
  pages   = {333--358},
  year    = {1980},
  doi     = {10.1016/S0006-3495(80)85063-6}
}

@article{McKeithan1995,
  author  = {T. W. McKeithan},
  title   = {Kinetic proofreading in {T}-cell receptor signal transduction},
  journal = {Proc. Natl. Acad. Sci. U.S.A.},
  volume  = {92},
  pages   = {5042--5046},
  year    = {1995},
  doi     = {10.1073/pnas.92.11.5042}
}

@article{Murugan2012,
  author  = {A. Murugan and D. A. Huse and S. Leibler},
  title   = {Speed, dissipation, and error in kinetic proofreading},
  journal = {Proc. Natl. Acad. Sci. U.S.A.},
  volume  = {109},
  pages   = {12034--12039},
  year    = {2012},
  doi     = {10.1073/pnas.1119911109}
}

@article{Murugan2014,
  author  = {A. Murugan and D. A. Huse and S. Leibler},
  title   = {Discriminatory proofreading regimes in nonequilibrium systems},
  journal = {Phys. Rev. X},
  volume  = {4},
  pages   = {021016},
  year    = {2014},
  doi     = {10.1103/PhysRevX.4.021016}
}

@article{Sartori2013,
  author  = {P. Sartori and S. Pigolotti},
  title   = {Kinetic versus energetic discrimination in biological copying},
  journal = {Phys. Rev. Lett.},
  volume  = {110},
  pages   = {188101},
  year    = {2013},
  doi     = {10.1103/PhysRevLett.110.188101}
}

@article{Sartori2015,
  author  = {P. Sartori and S. Pigolotti},
  title   = {Thermodynamics of error correction},
  journal = {Phys. Rev. X},
  volume  = {5},
  pages   = {041039},
  year    = {2015},
  doi     = {10.1103/PhysRevX.5.041039}
}

@article{Rao2015,
  author  = {R. Rao and L. Peliti},
  title   = {Thermodynamics of accuracy in kinetic proofreading: dissipation and efficiency trade-offs},
  journal = {J. Stat. Mech.},
  volume  = {2015},
  pages   = {P06001},
  year    = {2015},
  doi     = {10.1088/1742-5468/2015/06/P06001}
}

@article{Banerjee2017,
  author  = {K. Banerjee and A. B. Kolomeisky and O. A. Igoshin},
  title   = {Elucidating interplay of speed and accuracy in biological error correction},
  journal = {Proc. Natl. Acad. Sci. U.S.A.},
  volume  = {114},
  pages   = {5183--5188},
  year    = {2017},
  doi     = {10.1073/pnas.1614838114}
}

@article{Lan2012,
  author  = {G. Lan and P. Sartori and S. Neumann and V. Sourjik and Y. Tu},
  title   = {The energy--speed--accuracy trade-off in sensory adaptation},
  journal = {Nat. Phys.},
  volume  = {8},
  pages   = {422--428},
  year    = {2012},
  doi     = {10.1038/nphys2276}
}

@article{Cetiner2022,
  author  = {U. \c{C}etiner and J. Gunawardena},
  title   = {Reformulating nonequilibrium steady states and generalized {H}opfield discrimination},
  journal = {Phys. Rev. E},
  volume  = {106},
  pages   = {064128},
  year    = {2022},
  doi     = {10.1103/PhysRevE.106.064128}
}

@article{Gunawardena2012,
  author  = {J. Gunawardena},
  title   = {A linear framework for time-scale separation in nonlinear biochemical systems},
  journal = {PLoS ONE},
  volume  = {7},
  pages   = {e36321},
  year    = {2012},
  doi     = {10.1371/journal.pone.0036321}
}

@article{Mirzaev2013,
  author  = {I. Mirzaev and J. Gunawardena},
  title   = {Laplacian dynamics on general graphs},
  journal = {Bull. Math. Biol.},
  volume  = {75},
  pages   = {2118--2149},
  year    = {2013},
  doi     = {10.1007/s11538-013-9884-8}
}

@article{Schnakenberg1976,
  author  = {J. Schnakenberg},
  title   = {Network theory of microscopic and macroscopic behavior of master equation systems},
  journal = {Rev. Mod. Phys.},
  volume  = {48},
  pages   = {571--585},
  year    = {1976},
  doi     = {10.1103/RevModPhys.48.571}
}

@article{Lebowitz1999,
  author  = {J. L. Lebowitz and H. Spohn},
  title   = {A {G}allavotti--{C}ohen-type symmetry in the large deviation functional for stochastic dynamics},
  journal = {J. Stat. Phys.},
  volume  = {95},
  pages   = {333--365},
  year    = {1999},
  doi     = {10.1023/A:1004589714161}
}

@article{Seifert2008,
  author  = {U. Seifert},
  title   = {Stochastic thermodynamics: principles and perspectives},
  journal = {Eur. Phys. J. B},
  volume  = {64},
  pages   = {423--431},
  year    = {2008},
  doi     = {10.1140/epjb/e2008-00001-9}
}

@article{Wong2018,
  author  = {F. Wong and A. Amir and J. Gunawardena},
  title   = {Energy-speed-accuracy relation in complex networks for biological discrimination},
  journal = {Phys. Rev. E},
  volume  = {98},
  pages   = {012420},
  year    = {2018},
  doi     = {10.1103/PhysRevE.98.012420}
}

@article{Hopfield1976,
  author  = {J. J. Hopfield and T. Yamane and V. Yue and S. M. Coutts},
  title   = {Direct experimental evidence for kinetic proofreading in amino acylation of {tRNA}$^{\mathrm{Ile}}$},
  journal = {Proc. Natl. Acad. Sci. U.S.A.},
  volume  = {73},
  pages   = {1164--1168},
  year    = {1976},
  doi     = {10.1073/pnas.73.4.1164}
}

@article{cetiner2023universal,
  title   = {Universal bounds on information-processing capabilities of markov processes},
  author  = {Cetiner, Ugur and Gunawardena, Jeremy},
  journal = {arXiv preprint arXiv:2310.10584},
  year    = {2023}
}

\clearpage
\Appendix
\app{app:edge}{A single energetic edge and the arboreal coefficients}%
Perturb an equilibrium process by replacing one rate $\ell(z_1\to z_2)$ by $m\,\ell(z_1\to z_2)$, $m>1$. We call $z_1\to z_2$ the \emph{energetic edge}. By Eq.~\eqref{eq:action}, the action of a path $P$ from $i$ to $k$ changes by $\ln m$ if $P$ traverses $z_1\to z_2$, by $-\ln m$ if it traverses $z_2\to z_1$, and not at all otherwise:
\begin{equation}
  S(P)=S_{\mathrm{eq}}(i,k)+
  \begin{cases}
    \ln m, & z_1\to z_2\in P,\\
    -\ln m, & z_2\to z_1\in P,\\
    0, & \text{otherwise.}
  \end{cases}
  \label{eq:threevalues}
\end{equation}
No matter how large the graph, only three values occur. Accordingly, for each vertex $i$, partition the rooted spanning trees by how $T_i$ meets the energetic edge,
\begin{equation}
  \begin{aligned}
    \Th_+(i)&=\{T\in\Th_1(G)\mid z_2\to z_1\in T_i\},\\
    \Th_-(i)&=\{T\in\Th_1(G)\mid z_1\to z_2\in T_i\},\\
    \Th_0(i)&=\Th_1(G)\setminus\bigl(\Th_+(i)\sqcup\Th_-(i)\bigr),
  \end{aligned}
  \label{eq:partition}
\end{equation}
and define the arboreal coefficients as the corresponding probabilities,
\begin{equation}
  A_\bullet(i)=\frac{w\bigl(\Th_\bullet(i)\bigr)}{w\bigl(\Th_1(G)\bigr)},
  \qquad \bullet\in\{0,+,-\},
  \label{eq:acoeffs}
\end{equation}
so that $A_0(i)+A_+(i)+A_-(i)=1$ for every $i$. A tree in $\Th_+(i)$ carries $e^{-S(T_i)}=m\,e^{-S_{\mathrm{eq}}(i,1)}$, one in $\Th_-(i)$ carries $m^{-1}e^{-S_{\mathrm{eq}}(i,1)}$, and one in $\Th_0(i)$ carries $e^{-S_{\mathrm{eq}}(i,1)}$; grouping the sum in Eq.~\eqref{eq:arboreal} into these three classes gives Eq.~\eqref{eq:three}. The arboreal coefficients thus package the entire nonequilibrium problem exactly into three numbers per state.

For the butterfly there is one energetic edge per wing, but since the wings meet only at the root, the path from either exit to $1$ stays inside its own wing. The decomposition therefore applies to $e$ and $\bar e$ separately. Moreover, a path from $p_K$ to $1$ cannot contain $p_j\to p_{j+1}$: to reach $p_j$ from downstream it must already have passed through $p_{j+1}$, and a path does not repeat a vertex. Hence $A_-(e)=A_-(\bar e)=0$, and with $A_0=1-A_+$ Eq.~\eqref{eq:three} gives $\pi_e\propto e^{-S_{\mathrm{eq}}(e,1)}[1+(m-1)A_+(e)]$ and likewise for $\bar e$. Their ratio is Eq.~\eqref{eq:exacterror}, the prefactor $e^{-S_{\mathrm{eq}}(\bar e,1)+S_{\mathrm{eq}}(e,1)}$ being exactly $\eeq$ by Eq.~\eqref{eq:alpha} and the equality of the binding rates.

\app{app:recursions}{The two counting recursions}%
\textit{The recursion for $Z_r$.} Let $T$ be a spanning tree of $C_r$ rooted at $1$. Only two edges leave $p_r$, so there are two cases and the first splits again. \emph{Case A: $p_r\to1$}, of weight $\theta d_r$. Because $p_r$ now reaches the root directly, $p_{r-1}$ is free to point at it without creating a cycle. Ask whether it does. \emph{A1: it does not.} Then no edge enters $p_r$, so $p_1,\ldots,p_{r-1}$ must reach $1$ inside $C_{r-1}$: what remains is a spanning tree of $C_{r-1}$ rooted at $1$, and summing over all of them contributes $Z_{r-1}$. \emph{A2: it does}, at weight $u_{r-1}$. Then $p_{r-1}$ has spent its outgoing edge, and $p_1,\ldots,p_{r-2}$ must each reach either $1$ or $p_{r-1}$. What remains is a spanning forest of $C_{r-1}$ rooted at $\{1,p_{r-1}\}$, and summing gives $F_{r-1}$.

\emph{Case B: $p_r\to p_{r-1}$}, of weight $v_{r-1}$. Now $p_{r-1}\to p_r$ is barred, since it would close the two-cycle $p_{r-1}\to p_r\to p_{r-1}$. Again no edge enters $p_r$, every vertex reaches $1$ inside $C_{r-1}$, and summing gives $Z_{r-1}$. Collecting the three contributions, $Z_r=\theta d_r\bigl(Z_{r-1}+u_{r-1}F_{r-1}\bigr)+v_{r-1}Z_{r-1}$, the first identity of Eq.~\eqref{eq:recursions}.

\smallskip\noindent
\textit{The recursion for $F_r$.} Now take a spanning forest of $C_r$ rooted at $\{1,p_r\}$. Here $p_r$ spends no outgoing edge, so ask instead about $p_{r-1}$, the only vertex that can point at it. If $p_{r-1}\to p_r$, of weight $u_{r-1}$, the rest is a spanning forest of $C_{r-1}$ rooted at $\{1,p_{r-1}\}$, contributing $F_{r-1}$. If not, nothing reaches $p_r$ and each of $p_1,\ldots,p_{r-1}$ must reach $1$ inside $C_{r-1}$, contributing $Z_{r-1}$. That is the second identity of Eq.~\eqref{eq:recursions}.

\smallskip\noindent
\textit{Dividing through.} Set $\beta_{r-1}=1+u_{r-1}g_{r-1}$, so that $Z_{r-1}+u_{r-1}F_{r-1}=Z_{r-1}\beta_{r-1}$ and Eq.~\eqref{eq:recursions} becomes the compact pair
\begin{equation}
  Z_r=Z_{r-1}\bigl[v_{r-1}+\theta d_r\beta_{r-1}\bigr],
  \qquad
  F_r=Z_{r-1}\beta_{r-1}.
  \label{eq:pairEM}
\end{equation}
Dividing the second by the first gives $g_r=\beta_{r-1}/(v_{r-1}+\theta d_r\beta_{r-1})$; dividing numerator and denominator by $\beta_{r-1}$ and substituting $\beta_{r-1}=1+u_{r-1}g_{r-1}$ gives the continued fraction Eq.~\eqref{eq:cf}. Two consequences used below are $g_r\le1/(\theta d_r)$, because the second term in the denominator of Eq.~\eqref{eq:cf} is positive, and the per-layer ratio
\begin{equation}
  \rho_r(\theta)\;\equiv\;\frac{v_{r-1}Z_{r-1}(\theta)}{Z_r(\theta)}
  \;=\;\frac{v_{r-1}}{\theta}\,q_r(\theta),
  \qquad r\ge2,
  \label{eq:cr}
\end{equation}
whose product over $r=j+1,\ldots,K$ is $P_\theta$ by Lemma~\ref{lem:tail}.

\begin{figure*}[t]
  \centering
  \includegraphics[width=\linewidth]{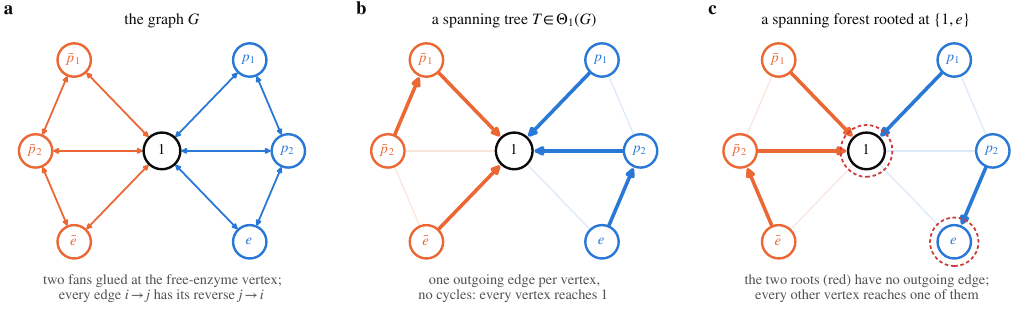}
  \caption{\textbf{The objects the proof counts}, for the core butterfly graph with $K=3$. \textbf{(a)}~The graph itself: states are vertices, transitions are labeled directed edges, and every edge is accompanied by its reverse. \textbf{(b)}~One of its spanning trees rooted at the free-enzyme vertex $1$: every vertex except the root has exactly one outgoing edge, and the absence of cycles means that every vertex has a unique directed path to $1$. The arboreal distribution, Eq.~\eqref{eq:arboreal}, assigns each such tree the probability $w(T)/w(\Th_1(G))$. \textbf{(c)}~One of its spanning forests rooted at $\{1,e\}$ (roots circled in red): the two roots have no outgoing edges, and every other vertex reaches one of them. The sums of the weights over the one-wing analogs of these two families are $Z_r$ and $F_r$, respectively.}
  \label{fig:objects}
\end{figure*}

\app{app:refined}{The refined bound}%
Write $a=\Pkj(\alpha;m)$, $b=\Pkj(1;m)$, $c=m-1>0$ and $s=\alpha^{-\nu}\in(0,1)$. All rates are positive, so $b>0$, while a rooted spanning tree of $C_K$ with $p_K\to1$ is not counted by $B_{K,j}$ and carries positive weight, so $b<1$; and $a>sb$ is Eq.~\eqref{eq:centralratio}. Then $b-a<b(1-s)<1-s$, which makes the first two terms in the bracket below sum to a positive number, while the third is $c(a-sb)>0$:
\begin{equation}
  \frac{1+ca}{1+cb}-\frac{1+sc}{1+c}
  =\frac{c\bigl[(1-s)-(b-a)+c(a-sb)\bigr]}{(1+cb)(1+c)}>0.
  \label{eq:elementary}
\end{equation}
Because $(1+sc)/(1+c)=s+(1-s)/m$ and $m=e^{\Delta\mu}$, inserting
Eq.~\eqref{eq:elementary} into Eq.~\eqref{eq:exacterror} gives Eq.~\eqref{eq:refined}.

\app{app:dilution}{Affine dependence, dilution, and the master formula}%
Recall the convention: a second argument displays the drive, its absence means $\Geq$. The two graphs have the same vertices and the same directed edges, and hence the same set $\Th_1(C_K)$ of rooted spanning trees; they differ only in the label of $p_j\to p_{j+1}$, which is $u_j$ in $\Geq$ and $m\,u_j$ in $G$. Everything below is that one substitution, tracked.

Let $\mathcal E\subset\Th_1(C_K)$ be the trees that use the driven edge, $\mathcal E=\{T:(p_j\to p_{j+1})\in T\}$. A tree assigns exactly one outgoing edge to each nonroot vertex, so $p_j\to p_{j+1}$ occurs in $T$ once or not at all, and its weight is either linear in $u_j$ or independent of it. Replacing $u_j$ by $m\,u_j$ therefore multiplies $w_\theta(T)$ by $m$ on $\mathcal E$ and leaves it alone off $\mathcal E$:
\begin{equation}
  \begin{aligned}
    Z_K(\theta;m)&=Z_K(\theta)+(m-1)\,w_\theta(\mathcal E)\\
    &=Z_K(\theta)\bigl[1+(m-1)R_\theta\bigr],
  \end{aligned}
  \label{eq:affine}
\end{equation}
every weight on the right-hand side being one of $\Geq$, with
\begin{equation}
  R_\theta=\frac{w_\theta(\mathcal E)}{Z_K(\theta)},
  \qquad
  P_\theta=\Pkj(\theta)=\frac{B_{K,j}(\theta)}{Z_K(\theta)}.
  \label{eq:PRdef}
\end{equation}
Dividing $B_{K,j}(\theta)$ by Eq.~\eqref{eq:affine} gives the dilution relation Eq.~\eqref{eq:dilute}. Both $P_\theta$
and $R_\theta$ lie in $(0,1)$, and their events are disjoint, so $P_\theta+R_\theta<1$: the all-fall-off tree, in which every $p_i$ points at $1$, lies in neither class and has positive weight. Inserting Eq.~\eqref{eq:dilute} into Eq.~\eqref{eq:exacterror} and clearing denominators gives Eq.~\eqref{eq:master}. Nothing here uses the butterfly structure: the affine dependence and the dilution relation hold for a single driven edge in any graph.

\app{app:signs}{Monotonicity in $\theta$ and Lemma~\ref{lem:first}}%
This appendix and the next rest on one function: set $h_r(\theta)=\theta\,g_r(\theta)$, which obeys a recursion of the same form; multiplying Eq.~\eqref{eq:cf} by $\theta$ and using $\theta(1+u_{r-1}g_{r-1})=\theta+u_{r-1}h_{r-1}$ gives
\begin{equation}
  h_r=\Bigl[\,d_r+\frac{v_{r-1}}{\theta+u_{r-1}h_{r-1}}\,\Bigr]^{-1},
  \qquad h_1=\frac{1}{d_1}.
  \label{eq:hrec}
\end{equation}
Induct: $h_1$ is constant, and if $h_{r-1}$ is nondecreasing, then $\theta+u_{r-1}h_{r-1}$ strictly increases, so the bracket strictly decreases and $h_r$ strictly increases. Hence $h_r$ is nondecreasing for every $r$, and
constant if and only if $r=1$.

\textit{Proof of Lemma~\ref{lem:first}.} By Eq.~\eqref{eq:pairEM},
$Z_r/Z_{r-1}=v_{r-1}+d_r\bigl(\theta+u_{r-1}h_{r-1}\bigr)$, strictly increasing in $\theta$, so $\rho_r(\theta)$ of Eq.~\eqref{eq:cr} strictly decreases, and $\rho_r(\alpha)<\rho_r(1)$. Lemma~\ref{lem:tail} and the telescoping of $Z_j/Z_K$ give $P_\theta=\prod_{r=j+1}^{K}\rho_r(\theta)$, so $P_\alpha/P_1=\prod_r\rho_r(\alpha)/\rho_r(1)<1$: each factor is the ratio of the two values of a single decreasing function at $\theta=\alpha$ and $\theta=1$. The last expression in Eq.~\eqref{eq:cr}, $\rho_r=(v_{r-1}/\theta)\,q_r$, bounds those factors from below: it gives $\rho_r(\alpha)/\rho_r(1)=\alpha^{-1}q_r(\alpha)/q_r(1)>\alpha^{-1}$, because $q_r$ increases by Lemma~\ref{lem:mono}. Hence $\alpha^{-\nu}<P_\alpha/P_1<1$, and $\mathrm{d}\epsilon/\mathrm{d}m|_{m=1}=\eeq(P_\alpha-P_1)$ from Eq.~\eqref{eq:master} at $c=0$. Finally, since $c=m-1>0$, Eq.~\eqref{eq:exacterror} gives $\epsilon<\eeq$ exactly when $\Pkj(\alpha;m)<\Pkj(1;m)$. Nothing above used $u_j$ beyond its positivity, so every step holds with $u_j$ replaced by $m\,u_j$: $\rho_r(\alpha;m)<\rho_r(1;m)$ for each $r>j$, and multiplying these gives $\Pkj(\alpha;m)<\Pkj(1;m)$.

\app{app:Q1}{Sign of $Q_1$}%
Cut the wing between $p_j$ and $p_{j+1}$. Downstream of the cut lies $C_{j+1}^{\downarrow}$; let $\Zd(\theta)$ be the total weight of its spanning trees rooted at $1$---the downstream partner of $Z_r$, which is why it carries the same letter. It does not contain $u_j$, so it takes no drive argument. If a tree of $C_K$ uses the driven edge $p_j\to p_{j+1}$, then $p_j$ has spent its outgoing edge, so no downstream vertex can route through it: the downstream side is a rooted tree of $C_{j+1}^{\downarrow}$ and the upstream side a forest of $C_j$ rooted at $\{1,p_j\}$, chosen independently. Hence $w_\theta(\mathcal E)=u_jF_j\Zd$, and since $P_\theta$ and $R_\theta$ share the denominator $Z_K(\theta)$ in Eq.~\eqref{eq:PRdef}, it cancels: with $B_{K,j}$ from Lemma~\ref{lem:tail} and $F_j=g_jZ_j$,

\begin{equation}
\Gamma_\theta-1=\frac{P_\theta}{R_\theta}=\frac{B_{K,j}(\theta)}{u_jF_j(\theta)\,\Zd(\theta)}
=\frac{\prod_{r=j}^{K-1}v_r}{u_j\,g_j(\theta)\,\Zd(\theta)}.
\label{eq:Gminus}
\end{equation}
The prefactor in Eq.~\eqref{eq:Gminus} is free of $\theta$, so $Q_1=2R_\alpha R_1(P_\alpha/R_\alpha-P_1/R_1)\leq 0$ is exactly the statement that $g_j\Zd$ is nondecreasing. Its two factors pull in opposite directions---$g_j$ strictly decreases by Lemma~\ref{lem:mono}, $\Zd$ strictly increases---so nothing follows from them separately. Move one power of $\theta$ across the product:
\begin{equation}
  g_j(\theta)\,\Zd(\theta)
  =\underbrace{\bigl[\theta g_j(\theta)\bigr]}_{h_j(\theta)}
  \;\cdot\;
  \underbrace{\bigl[\Zd(\theta)/\theta\bigr]}_{\widetilde Z(\theta)} .
  \label{eq:split}
\end{equation}
Every rooted spanning tree of $C_{j+1}^{\downarrow}$ uses at least one fall-off edge---the root must have in-degree at least one, and the only edges into it are fall-offs---and the tree with $p_{j+1}\to1$ and $p_r\to p_{r-1}$ for $r>j+1$ uses exactly one, while the all-fall-off tree uses all $\nu$. So $\Zd$ is a polynomial in $\theta$ with nonnegative coefficients whose lowest and highest powers are exactly $\theta$ and $\theta^{\nu}$, and $\widetilde Z=\Zd/\theta$ is again a polynomial with nonnegative coefficients: nondecreasing, and constant if and only if $\nu=1$. Since $h_j$ is nondecreasing as well, the product in Eq.~\eqref{eq:split} is nondecreasing.

\app{app:optimum}{The optimal drive}%
From Eq.~\eqref{eq:master}, $\ln\epsilon=\ln\eeq+\ln\Phi_\alpha-\ln\Phi_1$ and
\begin{equation}
  \frac{\mathrm{d}\ln\Phi_\theta}{\mathrm{d}c}
  =\frac{R_\theta+P_\theta}{1+c(R_\theta+P_\theta)}-\frac{R_\theta}{1+cR_\theta}
  =\frac{P_\theta}{\mathcal D_\theta(c)},
  \label{eq:dlogPhi}
\end{equation}
with $\mathcal D_\theta=(1+cR_\theta)\bigl(1+c(R_\theta+P_\theta)\bigr)>0$. Hence $\mathrm{d}\ln\epsilon/\mathrm{d}c$ has the sign of $Q=P_\alpha\mathcal D_1-P_1\mathcal D_\alpha$. Expanding $\mathcal D_\theta=1+c(2R_\theta+P_\theta)+c^{2}R_\theta(R_\theta+P_\theta)$ and subtracting gives Eq.~\eqref{eq:Qcoef}, the products $P_\alpha P_1$ canceling in the linear coefficient. Lemma~\ref{lem:first} gives $Q_0<0$ and Appendix~\ref{app:Q1} gives $Q_1=2R_\alpha R_1(\Gamma_\alpha-\Gamma_1)\le0$.

If $Q_2>0$, $Q$ is an upward-opening parabola with $Q(0)<0$, so the product of its roots is $Q_0/Q_2<0$: exactly one root is positive, namely $c^*$ of Eq.~\eqref{eq:cstar}, with $Q<0$ before it and $Q>0$ after it. If $Q_2\le0$, every term of $Q(c)$ is nonpositive for $c>0$ while $Q_0<0$, so $\epsilon$ decreases throughout. For the value at the optimum, note that $\Phi_\theta=\mathcal D_\theta/(1+cR_\theta)^{2}$, so that
\begin{equation}
  \frac{\epsilon}{\eeq}=\frac{\Phi_\alpha}{\Phi_1}
  =\frac{\mathcal D_\alpha}{\mathcal D_1}
   \left(\frac{1+cR_1}{1+cR_\alpha}\right)^{2};
  \label{eq:valstep}
\end{equation}
at $c=c^*$ the stationarity condition $Q=0$ reads $\mathcal D_\alpha/\mathcal D_1=P_\alpha/P_1$, which turns Eq.~\eqref{eq:valstep} into Eq.~\eqref{eq:valstar}.

\app{app:sharp}{Sharpness}%
The bound given by Theorem~\ref{thm:main} is \emph{sharp}, as the following one-parameter family shows. Fix $K$, $j$, $\alpha$, recall $\nu=K-j$, and take $\delta\in(0,1)$,
\begin{equation}
  \begin{aligned}
    d_r&=1\ \ (1\le r\le K),\qquad u_j=\delta^{\,\nu+2},\\
    u_r&=v_r=1\ \ (r<j),\\
    v_r&=\delta\ \ (j\le r<K),\qquad u_r=\delta\ \ (j<r<K),
  \end{aligned}
  \label{eq:family}
\end{equation}
with the binding rates fixed by detailed balance, which is always possible. One number, $\delta$, sets every rate. Starting from equilibrium, choose $m_\delta=\delta^{-(\nu+1)}$, so that the driven energetic rate becomes $m_\delta u_j=\delta$: after driving, every chain rate from $p_j$ onward, forward and backward, equals $\delta$. Since $d_r=1$, Eq.~\eqref{eq:cf} gives $0<g_r(\theta;m_\delta)\le1/\theta$ for every $r\geq1$, and the first identity of Eq.~\eqref{eq:recursions} reads, for $r>j$,
\begin{equation}
\frac{Z_r(\theta;m_\delta)}{Z_{r-1}(\theta;m_\delta)}=\delta+\theta\bigl(1+\delta\,g_{r-1}(\theta;m_\delta)\bigr).
\label{eq:sharplayer}
\end{equation}
Therefore, exploiting the bounds given above on $g_{r-1}(\theta;m_\delta)$, we arrive at 
\begin{equation}
  \theta+\delta<\frac{Z_r(\theta;m_\delta)}{Z_{r-1}(\theta;m_\delta)}\le\theta+2\delta,
  \qquad r>j.
  \label{eq:sharpbound}
\end{equation}
Multiplying Eq.~\eqref{eq:sharpbound} over the $\nu$ layers $r=j+1,\dots,K$, and using the forced-tail identity $B_{K,j}(\theta)=\delta^{\nu}Z_j(\theta)$ of Lemma~\ref{lem:tail} gives
\begin{equation}
  \left(\frac{\delta}{\theta+2\delta}\right)^{\!\nu}
  \le\Pkj(\theta;m_\delta)
  <\left(\frac{\delta}{\theta+\delta}\right)^{\!\nu}.
  \label{eq:famest}
\end{equation}
Two consequences, both as $\delta\to0$. Taking the ratio of Eq.~\eqref{eq:famest}
at $\theta=\alpha$ and at $\theta=1$ traps
\begin{equation}
  \left(\frac{1+\delta}{\alpha+2\delta}\right)^{\!\nu}
  <\frac{\Pkj(\alpha;m_\delta)}{\Pkj(1;m_\delta)}
  <\left(\frac{1+2\delta}{\alpha+\delta}\right)^{\!\nu}
  \longrightarrow\alpha^{-\nu},
  \label{eq:famratio}
\end{equation}
while the lower bound alone gives, for $\theta\in\{1,\alpha\}$,
\begin{equation}
  (m_\delta-1)\,\Pkj(\theta;m_\delta)
  \ \ge\ \frac{\delta^{-1}-\delta^{\nu}}{(\theta+2\delta)^{\nu}}
  \ \longrightarrow\ \infty.
  \label{eq:famdiv}
\end{equation}
As before, write $a=\Pkj(\alpha;m_\delta)$, $b=\Pkj(1;m_\delta)$ and $c=m_\delta-1$, and factor $ca$ out of the numerator of Eq.~\eqref{eq:exacterror} and $cb$ out of its denominator: 
\begin{equation}
  \frac{\epsilon(m_\delta)}{\eeq}=\frac{1+ca}{1+cb}
  =\frac{a}{b}\cdot\frac{1+(ca)^{-1}}{1+(cb)^{-1}}\,.
  \label{eq:sharplimit}
\end{equation}
The first factor tends to $\alpha^{-\nu}$ by Eq.~\eqref{eq:famratio}, the second
to $1$ by Eq.~\eqref{eq:famdiv}. Hence $\epsilon(m_\delta)\to\eeq\,\alpha^{-\nu}=\eeq^{\,K-j+1}$. Theorem~\ref{thm:main} forbids equality at finite parameters, therefore,
\[
  \inf\epsilon=\eeq^{\,K-j+1},
\]
approached but not attained. $\square$

\smallskip\noindent
Figure~\ref{fig:tradeoff} shows this family. Driving it at the optimum $m^{*}$ of Theorem~\ref{thm:opt}---which exists here, since $Q_2>0$ for every entry of the table below---instead of at $m_\delta$ only helps, since $m^{*}$ minimizes $\epsilon$ over $m$: then $\eeq^{\,K-j+1}<\epsilon(m^{*})\le\epsilon(m_\delta)$, and the same limit follows by squeezing. Reducing $\delta$ drives the error onto the floor:
\begin{center}
\begin{tabular}{crrr}
\hline
$\delta$ & $j=1$ & $j=2$ & $j=3$\\
\hline
$10^{-1}$ & $2.6512$ & $1.6262$ & $1.2319$\\
$10^{-2}$ & $1.1211$ & $1.0565$ & $1.0236$\\
$10^{-3}$ & $1.0117$ & $1.0056$ & $1.0024$\\
$10^{-4}$ & $1.0012$ & $1.0006$ & $1.0002$\\
\hline
\end{tabular}
\end{center}
\noindent The entries are $\epsilon(m^{*})/\eeq^{\,K-j+1}$ for $K=4$, $\alpha=3$. Each decade of $\delta$ removes about a decade of the excess: the approach is $O(\delta)$, and the limit is the floor itself.

\end{document}